\let\emptyset\varnothing
\newcommand{\lL}{\mathscr{L}}
\newcommand{\Fm}{\mathsf{Form}}
\newcommand{\Nom}{\mathrm{Nom}}
\newcommand{\Prop}{\mathrm{Prop}}
\newcommand{\MOD}{\mathrm{Mod}}
\newcommand{\Act}{\mathrm{Act}}
\newcommand{\mM}{\mathsf{M}}
\newcommand{\Ww}{\mathrm{W}}
\newcommand{\rel}{\mathrm{R}}
\newcommand{\famrpos}{(\rel^+_\pi)_{\pi\in\MOD}}
\newcommand{\famrneg}{(\rel^-_\pi)_{\pi\in\MOD}}
\newcommand{\famrposact}{(\rel^+_a)_{a\in\Act}}
\newcommand{\famrnegact}{(\rel^-_a)_{a\in\Act}}
\newcommand{\rpos}{\rel^+}
\newcommand{\rneg}{\rel^-}
\newcommand{\rboth}{\rel^\pm}
\newcommand{\nval}{\mathrm{N}}
\newcommand{\val}{\mathrm{V}}
\newcommand{\vpos}{\val^+}
\newcommand{\vneg}{\val^-}
\newcommand{\vboth}{\val^\pm}
\newcommand{\four}{\mathbf{4}}
\newcommand{\fequiv}{\equiv_\mathbf{4}}
\newcommand{\irr}{\mathrm{I}}
\newcommand{\cl}{\mathsf{CL}}
\newcommand{\U}{\mathrm{U}}
\newcommand{\same}{\sim_\Theta}
\newcommand{\precededby}{\subseteq_\Theta}
\newcommand{\dbhl}{\textsf{DBHL}}
\newcommand{\fdl}{\textsf{4DL}}
\newcommand{\fhl}{\textsf{4HL}}
\newcommand{\pdl}{\textsf{PDL}}
\newcommand{\ml}{\textsf{ML}}
\newcommand{\redminus}[1]{{\color{red}(}{#1}{\color{red})^-}}
\newcommand{\blueminus}[1]{{\color{blue}(}{#1}{\color{blue})^-}}
\newcommand{\blueneg}[1]{{\color{blue}\neg(}{#1}{\color{blue})}}
\newtheorem{thm}{Theorem}
\newtheorem{cor}{Corollary}
\newtheorem{lem}{Lemma}
\newtheorem{prop}{Proposition}
\newtheorem{defn}{Definition}
\newtheorem{ex}{Example}
\begin{document}
	
	\title{\fdl: a four-valued Dynamic logic and its proof-theory\thanks{This work was partially supported by the Programming Principles, Logic and Verification group at University College London, and by the LASIGE Research Unit, ref.\ UIDB/00408/2020 and ref.\ UIDP/00408/2020}}

	\author{Diana Costa}

	\institute{LASIGE, Faculdade de Ciências, Universidade de Lisboa, PT\\
	\email{dfdcosta@fc.ul.pt}}

	\maketitle		
	
\begin{abstract}
	Transition systems are often used to describe the behaviour of software systems. If viewed as a graph then, at their most basic level, vertices correspond to the states of a program and each edge represents a transition between states via the (atomic) action labelled. In this setting, systems are thought to be consistent and at each state formulas are evaluated as either true or false.
	
	On the other hand, when a structure of this sort -- for example a map where states represent locations, some local properties are known and labelled transitions represent information available about different routes -- is built resorting to multiple sources of information, it is common to find inconsistent or incomplete information regarding what holds at each state, both at the level of propositional variables and transitions.
	
	This paper aims at bringing together Belnap's four values, Dynamic Logic and hybrid machinery such as nominals and the satisfaction operator, so that reasoning is still possible in face of contradicting evidence. \mbox{Proof-theory} for this new logic is explored by means of a terminating, sound and complete tableau system.
\end{abstract}

\section{Introduction}
	In Computer Science, when one is uniquely interested in a program's
	\mbox{input-output} behaviour, one tends to regard computation steps
	as instantaneous, discrete state changes.
	These transition systems are usually viewed as graphs, whose vertices
	correspond to states and labelled edges represent the transitions
	between those states, which result from running a certain program.
	Modelling and reasoning within such systems is commonly done resorting
	to Modal Logics (\ml) \cite{modal_book}, a collection of formal systems
	equipped with operators to express possibility and necessity,
	modalities which are associated with a program, and where the
	valuation of formulas is processed with regard to a specific state.
	The extension to Propositional Dynamic Logic (\pdl) \cite{PDL1,PDL2}
	is particularly well-known for allowing the construction of complex
	programs based on atomic actions.
	Namely, \pdl\ includes sequential programs -- $\alpha;\beta$ --
	to perform first $\alpha$ then $\beta$, choice programs --
	$\alpha\cup\beta$ -- for \mbox{non-deterministically} choosing
	between $\alpha$ or $\beta$, iteration programs -- $\alpha^*$ --
	to run $\alpha$ a \mbox{non-deterministically} chosen number of
	times, and test programs -- $\varphi?$ -- to check that the
	formula $\varphi$ is satisfied before proceeding (keep in mind
	that $\alpha$ and $\beta$ may themselves be complex programs!).
	\pdl\ has proved to be valuable in theoretical CS, Computational Linguistics and Artificial Intelligence.
	
	As ordinary in Modal Logics, formulas in \pdl\ are also
	evaluated locally and can only take one of two values:
	true or false. Classically, as is well-known, the
	concepts ``not true'' and ``false'' are equivalent;
	gluts such as ``true and false'' lead to explosion and
	gaps such as ``neither true nor false'' are not allowed.
	The proposal of Belnap and Dunn consisted in admitting
	these scenarios, therefore allowing each formula to take
	one of four ($\four$) values: (only) true, (only) false,
	neither (true nor false) or both (true and false) \cite{belnap}.
	Belnapian logics are both paraconsistent and paracomplete:
	the Principle of Explosion and the Principle of Excluded Middle
	are both dismissed.
	
	Early attempts at incorporating paraconsistency into Modal,
	Hybrid and Dynamic Logics restricted the use of four values to
	propositional variables \cite{od10,B06,dc16,sedlar}.
	Versions of Modal Logic where both propositional variables
	and accessibility relations are $\mathbf{4}$-valued often seem
	to misbehave in extensions to Hybrid Logic \cite{Blackburn00},
	which in its simplest configuration adds to Modal Logic (i) a new
	set of propositional variables, called nominals, such that each
	holds at a single state, thus naming it, and (ii) the satisfaction operator $@$, with which we can jump to a designated state; namely the value of a formula of the form $@_i\varphi$ is independent from the state where it is being evaluated: the value of $@_i\varphi$ in any state corresponds to the value of $\varphi$ in the state that is
	named by $i$.
	
	The reason to claim that extensions of $\four$-valued
	Modal Logics misbehave has to do precisely with the role
	of nominals in modal formulas.
	Assuming that nominals act classically, \emph{i.e.}, assuming
	that they are two-valued and each holds in a single state as
	described, it would be expected that the value of the
	accessibility relation in the pair $(w,w')$ coincides with
	the value of the formula $@_i\Diamond j$, where $i$ and $j$
	name the states $w$ and $w'$, respectively;
	formally, $\rel(w,w')= \val(@_i\Diamond j,-)$ (note that
	being a satisfaction formula, the state of evaluation
	is irrelevant).
	However, that is not typically the case; an example of a
	Modal Logic that would fail this property in an extension
	to Hybrid Logic can be found in \cite{Riv1}.
	
	Usually Modal Logics where accessibility relations are
	many-valued, such as \cite{Riv1}, preserve the duality
	between modal operators, making the formulas $\Diamond\neg\varphi$
	and $\neg\Box\varphi$ equivalent.
	The approach to negation when it appears directly before
	modal operators seems to be repeatedly the same: rather than
	focusing on the transition, the negation is carried towards
	the non-modal part of the formula, which raises the question:
	if, in a $\four$-valued setting for propositional variables
	and accessibility relations, a propositional variable and its
	negation are evaluated independently, why are not modal formulas
	and their negations?
	
	A core point of this paper is precisely the adoption of
	this alternative approach to the semantics of modal formulas
	and their negations.
	The general idea is that when negation occurs directly before
	a modal operator it \emph{will} directly affect the
	interpretation of the transition.
	In particular, the negation of modal formulas is associated
	with evidence about the \emph{absence} of transitions,
	whereas modal formulas not involving negation before the
	modal operator are interpreted using evidence about the
	\emph{presence} of transitions.
	In \fhl, introduced in Section 2, the formula
	$@_i\neg \Diamond j$ expresses that there is not a
	transition from the state named by $i$ to the state
	named by $j$, whereas the formula $@_i\Box \neg j$
	expresses that all transitions starting from $i$ end
	in a state different from $j$.
	Notice the shift in perspective: the former talks about
	absence of transitions; the latter, in opposition, talks
	about the presence of transitions.
	These formulas will not be regarded as equivalent,
	which means that \fhl\ lacks formal duality (other modal systems that lack formal duality have been proposed in
	\cite{formalduality}).
	
	The presence and absence of transitions is captured by
	decoupling the accessibility relation into positive and
	negative versions, in an analogous fashion to what is
	commonly accepted for propositional variables, with
	positive and negative valuations being used to interpret
	$p$ and $\neg p$, respectively.
	Thus positive accessibility relations will be used
	for the interpretation of formulas of the form $\Diamond\varphi$
	and $\Box\varphi$, and negative accessibility relations
	will be used for the interpretation of formulas of the
	form $\neg\Diamond\varphi$ and $\neg\Box\varphi$.
	
	The logic \fhl\ is endowed with paraconsistency
	and paracompleteness at the level of propositional
	variables and accessibility relations: models in this
	system may satisfy $@_i p$ and $@_i\neg p$ simultaneously,
	or neither, as well as they may satisfy both $@_i \Diamond j$
	and $@_i \neg \Diamond j$, or neither.
	
	Moving forward, the paper presents a dynamic extension of \fhl,
	termed \fdl.
	Section 3 dives into the interpretation of the composition of
	actions, each associated with a \mbox{$\four$-valued} accessibility
	relation, in order to obtain the relations associated with the
	programs of sequence, choice, iteration and test.
	The interpretation of positive relations for each composite program
	follows traditional steps.
	This does not constitute a surprise given that ``positive'' modal
	formulas (where negation does not appear directly before the modal
	operator) are interpreted resorting to the positive accessibility
	relation.
	Curiously, negative relations for composite programs are thought of
	in terms of their complement but are otherwise analogous in construction.
	This strategy of resorting to the complement of the negative relation
	has already been used in the completeness proof of other versions of
	Hybrid Logic with $\four$-valued relations and non-dual modal operators,
	such as \cite{jlamppaper}.
	
	Finally, on Section 4, one can find some proof-theory for \fdl,
	namely under the form of a sound and complete tableau system,
	which combines techniques from \pdl\ \cite{tabpdl1} and Hybrid Logic \cite{hybridproof}.
	
	In summary, one of the central ideas of the paper is that in a
	paraconsistent environment where information about transitions
	may be inconsistent or incomplete, there is an unbreakable link
	between negation and modal operators when they occur in this order.
	This, however, involves a trade-off, as to do so the formal
	duality between modal operators has to be discarded.
	The presence of nominals and of the satisfaction operator is
	crucial in order to enable reference to the state where
	propositional inconsistencies are present, as well as to pinpoint
	which pairs of states have inconsistent transitions.
	Those are the building blocks of a $\four$-valued Hybrid Logic
	which is later extended to a dynamic version where actions are
	composed.
	The latter is accompanied by a sound and complete tableau system.

\section{\fhl: a four-valued Hybrid Logic with non-dual modal operators}

This section explores the logic \fhl, a version of Hybrid Logic with \mbox{$\four$-valued} interpretations of propositional variables and accessibility relations. Two sorts of negation are considered: a classical one, ${\sim}$, so that ${\sim} \varphi$ is read as \emph{it is not the case that $\varphi$}; and a paraconsistent one, $\neg$, such that $\neg\varphi$ is read as \emph{$\varphi$ is false}. In a paraconsistent environment such as \fhl\ the satisfiability of these statements is independent.

The language $\lL$ consists of a set of nominals $\Nom$, a set of propositional variables $\Prop$, a nullary connective $\bot$, unary connectives $\neg,[\pi],\langle\pi\rangle$, where \mbox{$\pi\in\MOD$} and $\MOD$ is a set of modalities, and binary connectives $\wedge,\vee,\rightarrow$. The classical negation of a formula, ${\sim}\varphi$, is defined as $\varphi\rightarrow \bot$, $\top$ is defined as ${\sim}\bot$ and $\varphi\leftrightarrow \psi$ is defined as $(\varphi\rightarrow\psi)\wedge (\psi\rightarrow\psi)$; we will sometimes resort to these abbreviations in order to simplify notation. $\Fm$ is the set of formulas over $\lL$.

\begin{defn}[Model]\label{defn:model}
	A model $\mM$ is a tuple $(\Ww, \famrpos, \famrneg, \nval, \vpos,\linebreak \vneg)$, where:
	
	\begin{itemize}
		\setlength\itemsep{0cm}
		\renewcommand\labelitemi{$\logof$}
		\item $\Ww\neq\emptyset$ is the \emph{domain} of \emph{states} (also known as \emph{worlds});
		
		\item $\rpos_\pi$ and $\rneg_\pi$ are binary relations over $\Ww^2$, called respectively the \emph{positive}/\linebreak\emph{negative} \emph{$\pi$-accessibility relation}, for each modality $\pi$;
		
		\item $\nval:\Nom\rightarrow \Ww$ is a \emph{hybrid nomination}, a function that assigns nominals to elements in $\Ww$; $\nval(i)$ is the element of $\Ww$ \emph{named by} $i$;
		
		\item $\vpos$ and $\vneg$ are \emph{hybrid valuations}, both with domain $\Prop$ and range $\mathscr{P}(\Ww)$, such that $\vpos(p)$ is the set of states where the propositional variable $p$ holds, and $\vneg(p)$ is the set of states where $\neg p$ holds.
	\end{itemize}
\end{defn}

To put it simply, $\vboth(p)$ is the set of states where there is evidence that $p$ is true/false. Analogously, $\rboth_\pi$ is the set of pairs of states for which there is evidence about the existence/absence of a $\pi$-transition. In particular, $\rpos_\pi$ can be seen as a \emph{possibility} relation and $\rneg_\pi$ as an \emph{impossibility} relation, in a similar fashion to $R$ and $R^\sim$ in $\mathrm{FSK^{d-}}$ \cite{formalduality}. Having said that, in \fhl\ the semantics of both $\neg\Diamond\varphi$ and $\neg\Box\varphi$ resorts to the impossibility relation as we shall see next, while in $\mathrm{FSK^{d-}}$ that is only the case for  $\neg\Diamond\varphi$.

\begin{defn}[Satisfaction]\label{defn:satisfaction}
	A satisfaction relation $\models$ between a model $\mM$, a state $w$ and a formula $\varphi$ is defined by structural induction on $\varphi$ in Figure~\ref{fig:satisfaction}.
	
	\begin{figure}[ht]
		\begin{longtable}{rl}
		(i) & $\mM, w\models p \Leftrightarrow w\in \vpos(p)$; $\mM, w\models \neg p \Leftrightarrow w\in \vneg(p)$;\\[0.5ex]
		(ii) & $\mM, w\models i \Leftrightarrow   w=\nval(i)$;	$\mM, w\models \neg i  \Leftrightarrow   w \neq \nval(i)$;\\[0.5ex]
		(iii) & $\mM, w \models\bot$  never;	$\mM, w \models\neg\bot$  always;\\[0.5ex]
		(iv) & $\mM, w \models\neg\neg\varphi  \Leftrightarrow   \mM, w \models\varphi$;\\[0.5ex]
		(v) & $\mM, w \models \varphi\wedge\psi \Leftrightarrow   \mM, w \models\varphi\ \text{and}\ \mM, w \models \psi$;\\[0.5ex]
		& $\mM, w\models \neg (\varphi\wedge \psi)  \Leftrightarrow   \mM, w\models\neg\varphi\ \text{or}\ \mM, w \models \neg\psi$;\\[0.5ex]
		(vi) & $\mM, w \models \varphi\rightarrow\psi \Leftrightarrow   \mM, w \models\varphi\ \text{implies}\ \mM, w \models \psi$;\\[0.5ex]
		& $\mM, w\models \neg (\varphi\rightarrow \psi)  \Leftrightarrow   \mM, w\not\models\neg\varphi\ \text{and}\ \mM, w \models \neg\psi$;\\[0.5ex]
		(vii) & $\mM, w \models \varphi\vee \psi \Leftrightarrow  \mM, w \models\varphi\ \text{or}\ \mM, w \models\psi$\\[0.5ex]
		& $\mM, w \models \neg(\varphi\vee \psi) \Leftrightarrow  \mM, w \models\neg \varphi\ \text{and}\ \mM, w \models\neg\psi$\\[0.5ex]
		(viii) & $\mM, w \models \langle\pi\rangle \varphi  \Leftrightarrow   \exists w'(w\rpos_\pi w'\ \text{and } \mM,w'\models \varphi)$;\\[0.5ex]
		& $\mM, w \models \neg\langle\pi\rangle \varphi \Leftrightarrow 
		 \forall w'(w\rneg_\pi w' \text{ or } \mM,w'\models \neg\varphi)$\\[0.5ex]
		& $\phantom{\mM, w \models \neg\langle\pi\rangle \varphi} \Leftrightarrow
		\forall w'(w(\rneg_\pi)^c w' \text{ implies } \mM,w'\models \neg\varphi)$;\\[0.5ex]
		(ix) & $\mM, w \models [\pi] \varphi \Leftrightarrow   \forall w'(w\rpos_\pi w'\ \text{implies } \mM, w'\models \varphi)$;\\[0.5ex]
		& $\mM, w \models \neg[\pi] \varphi \Leftrightarrow   \exists w'(\text{it is false}(\mM,w'\models \neg \varphi \text{ implies } w\mathrm{R}^-_\pi w'))$\\[0.5ex]
		& $\phantom{\mM, w \models \neg[\pi] \varphi} \Leftrightarrow   \exists w'(w(\mathrm{R}^-_\pi)^c w'\ \text{and } \mM,w'\models \neg \varphi)$;\\[0.5ex]
		(x) & $\mM, w \models @_i \varphi \Leftrightarrow   \mM, w' \models \varphi, \text{where}\ w'=\nval(i)$;\\[0.5ex]
		& $\mM, w \models \neg @_i \varphi \Leftrightarrow   \mM, w' \models \neg\varphi,  \text{where}\ w'=\nval(i)$.\\[0.5ex]
		\end{longtable}
		\caption{Definition of the satisfaction relation $\mM,w\models\varphi$ for \fhl.\label{fig:satisfaction}}
	\end{figure}

We say that a formula $\varphi$ is \emph{globally satisfied} if $\mM\models\varphi$, \emph{i.e.}, $\mM, w \models \varphi$ for all $w\in \Ww$; a formula is \emph{valid}, $\models \varphi$, if it is globally satisfied in all models.
Two formulas are equivalent, denoted $\varphi\equiv \psi$, if for every model $\mM$ and every state $w$, $\mM,w\models \varphi$ if and only if $\mM,w\models \psi$; two formulas are $\mathbf{4}$-equivalent, denoted $\varphi\fequiv\psi$ if $\varphi\equiv \psi$ and $\neg\varphi\equiv \neg\psi$.
\end{defn}

We read $\mM, w\models \varphi/\neg\varphi$ as ``in the model $\mM$ there is information that at state $w$ the formula $\varphi$ is true/false''. In terms of notation, $(\rboth_{\pi})^c$ denotes the complement of the relation $\rboth_\pi$ in $\Ww^2$.

Observe that the formal duality between modal operators does not hold for the paraconsistent negation $\neg$. A quick look at the semantics in Figure~\ref{fig:satisfaction} is enough to check that $\neg\langle\pi\rangle\varphi\not\equiv [\pi]\neg \varphi$ and $\langle\pi\rangle\neg \varphi\not\equiv \neg[\pi] \varphi$. The reason behind this is the fact that the semantics for what we will call \emph{positive} modal formulas -- formulas of the form $\langle\pi\rangle\varphi$ or $[\pi]\varphi$ -- relies on evidence about the presence of transitions, whereas for \emph{negative} modal formulas -- formulas of the form $\neg \langle\pi\rangle\varphi$ or $\neg [\pi]\varphi$ -- it relies on evidence about the absence of transitions. In \fhl\ modal formulas are interpreted roughly as follows:

\begin{itemize}
	\setlength\itemsep{0cm}
	\renewcommand\labelitemi{$\logof$}
	\item $\langle\pi\rangle \varphi$ is intuitively interpreted as \emph{it is possible to reach $\varphi$ at least true}.
	
	The formula $@_i\langle\pi\rangle j$ holds in a model if and only if there is evidence of a $\pi$-transition from the state named by the nominal $i$ to the state named by the nominal $j$, \emph{i.e.}, $\nval(i)\rpos_\pi \nval(j)$.
	
	\item $[\pi]\varphi$ is intuitively interpreted as \emph{it is mandatory to reach $\varphi$ at least true}.
	
	The formula $@_i[\pi]\neg j$ holds if and only if all evidence about the existence of $\pi$-transitions from the state named by the nominal $i$ lead towards a state $w'$ which is not named by $j$; therefore $\nval(i)(\rpos_\pi)^c \nval(j)$.

	\item $\neg\langle\pi\rangle \varphi$ is intuitively interpreted as \emph{it is not possible to reach $\varphi$ at most true}.
	This formula holds when, for all states $w$ where $\neg \varphi$ does not hold (\emph{i.e.} where $\varphi$ may be at most true) there is evidence about the absence of a $\pi$-transition from the current state to $w$. 
	
	The formula $@_i\neg \langle\pi\rangle j$ holds if and only if there is evidence that there is no $\pi$-transition from the state named by the nominal $i$ to the state named by the nominal $j$, \emph{i.e.}, $\nval(i)\rneg_\pi \nval(j)$.
	
	\item $\neg[\pi]\varphi$ is intuitively interpreted as \emph{it is not necessary to reach $\varphi$ at most true}. This formula holds if there are states where, even though $\varphi$ is false, there is no evidence that the $\pi$-transition from the current state towards there is missing.
	
	The formula $@_i\neg [\pi]\neg j$ holds if and only if there is no evidence that there is no $\pi$-transition from the state named by the nominal $i$ to the state named by the nominal $j$, \emph{i.e.}, $\nval(i)(\rneg_\pi)^c \nval(j)$.
\end{itemize}

It is however the case that ${\sim}\langle\pi\rangle\varphi\fequiv [\pi]{\sim} \varphi$ and $\langle\pi\rangle\varphi\fequiv {\sim}[\pi]{\sim} \varphi$ (keep this observation in mind as it will play an important role soon). Furthermore, even though the accessibility relations are $\four$-valued, \fhl\ is a normal logic, \emph{i.e.} the \mbox{$\mathrm{K}$-axiom} $[\pi](\varphi\rightarrow\psi)\rightarrow ([\pi]\varphi \rightarrow [\pi] \psi)$ is valid.

In general, $\not\models{\sim}\varphi \rightarrow \neg\varphi$ and $\not\models\neg\varphi \rightarrow {\sim}\varphi$, since the notions ``\emph{not true}'' and ``\emph{false}'' are independent in this framework. The exception are pure formulas, \emph{i.e.} those which involve only nominals, which still behave classically. Also, in general, ${\sim}\neg\varphi\not\equiv \varphi$ (if it is not the case that $\varphi$ is false, then $\varphi$ is not necessarily true) and $\neg{\sim}\varphi\not\equiv \varphi$; nonetheless $\neg{\sim}\varphi\fequiv {\sim}\neg\varphi$.

Let us at last briefly address the choice for the connective $\rightarrow$. While $\rightarrow$ is neither the strong implication nor the weak implication found in literature (\cite{wansing,Riv1} among many others), it appears very naturally here. Note that \mbox{$\mM, w\models \varphi \rightarrow \psi$} is interpreted as an ``actual'' implication, \emph{i.e.}, if and only if $\mM, w\models \varphi$ \emph{implies} $\mM, w\models \psi$. This is not an unusual approach; in fact, it mimics weak implication. Given that the ``positive'' implication is commonly accepted, observe that \mbox{$\mM, w \models {\sim}\varphi$} if and only if $\mM, w \not\models \varphi$. Thus $\varphi\rightarrow \psi \equiv {\sim}\varphi \vee \psi$. If we force the paraconsistent negation on both sides, we get $\neg (\varphi\rightarrow \psi) \equiv \neg ({\sim}\varphi \vee \psi)$, which by simple inspection of the semantics and the previous comments about $\sim$ and $\neg$ yields that \mbox{$\neg (\varphi\rightarrow \psi) \equiv {\sim}\neg\varphi \wedge \neg\psi$}.

\smallskip

The use of both ${\sim}$ and $\neg$ allows us to express Belnap's four truth-values of a formula $\varphi$ in a model $\mM$ easily:

\begin{itemize}
	\item $\varphi$ is only true at $w$: $\mM, w\models \varphi \wedge {\sim}\neg\varphi$;
	\item $\varphi$ is only false at $w$: $\mM, w\models \neg\varphi \wedge {\sim}\varphi$;
	\item $\varphi$ is both true and false at $w$: $\mM, w\models \varphi \wedge \neg\varphi$;
	\item $\varphi$ is neither true nor false at $w$: $\mM, w\models {\sim}\varphi \wedge {\sim}\neg\varphi$.
\end{itemize}

The presence of the satisfaction operator and nominals is enough to make it possible to describe named models (where all states are named by at least one nominal) in \fhl\ by the set of basic formulas that hold there, a method known as \emph{diagram} in classical Hybrid Logic. Since there is no direct relationship between the satisfiability of a propositional variable and its negation and between evidence about the presence and absence of transitions, diagrams are required to contain more than (Hybrid Logic's) atoms. Formally:

\begin{defn}\label{defn:diagram}
	Consider the set of irreducible information $\irr =\{@_i p, @_i \neg p,\linebreak @_i\langle \pi\rangle j, @_i \neg \langle \pi\rangle j, @_i j \ |\ i,j\in\Nom,~p\in\Prop,~\pi\in\MOD\}$.
	The diagram of a named model $\mM$ is the following set:

	$$\begin{array}{rcl}
	\mathrm{Diag}(\mM) & = & \{ a\in \irr \text{ s.t. } \mM \models a\}.
	\end{array}$$
\end{defn}

The following example illustrates this notion:

\begin{ex}\label{ex:diagram}
	Consider the model $\mM=(\Ww=\{w_1,w_2,w_3,w_4,w_5\},\linebreak \rpos=\{(w_1,w_2), (w_4,w_3)\},\ \rneg=\{(w_1,w_2), (w_1,w_3)\},\ \nval: (i\mapsto w_1, j\mapsto w_2,\linebreak k\mapsto w_3, l\mapsto w_4, m\mapsto w_5),\ \vpos: (p\mapsto \{w_2,w_4\}, q\mapsto \emptyset),\ \vneg: (p\mapsto \{w_4\},\linebreak q\mapsto \{w_3\}) )$, represented in Figure~\ref{fig:diagram}.
	
	\begin{figure}[ht]
		\centering
		\includegraphics[width=0.5\textwidth]{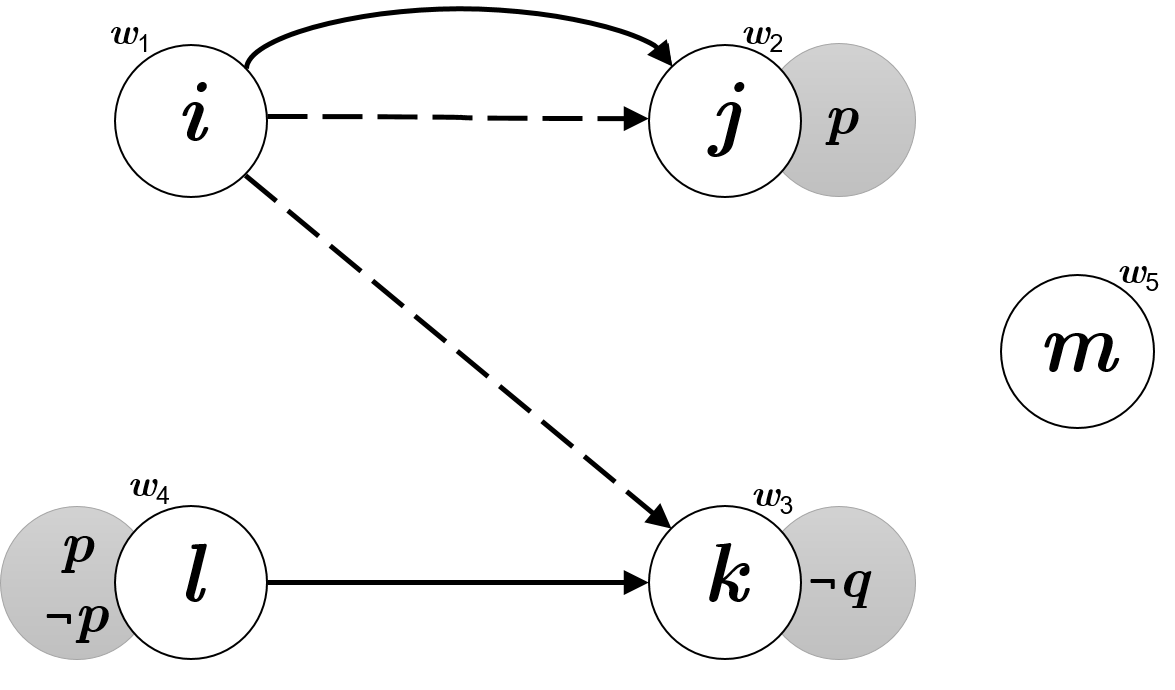}
		\caption{A sketch of model $\mM$: full lines indicate info. about the presence of a transition, dashed lines indicate info. about the absence of a transition.}\label{fig:diagram}
	\end{figure}

	The diagram of $\mM$ comes as follows:
	
	$$\begin{array}{rcll}
	\mathrm{Diag}(\mM) & = & \{@_i \Diamond j, @_l \Diamond k, & \text{info. about presence of transitions}\\
	& & @_i \neg \Diamond j, @_i \neg \Diamond k, & \text{info. about absence of transitions}\\
	& & @_j p, @_k \neg q, @_l p, @_l \neg p, & \text{info. about local properties}\\
	& & @_i i, @_j j, @_k k, @_l l, @_m m\} & \text{info. about equalities between states}
	\end{array}$$
\end{ex}

We say that $\psi$ is a global consequence of a set of formulas $\Gamma$, denoted $\Gamma \models_g \psi$, if for all models $\mM$, if $\mM\models \Gamma$ then $\mM\models \psi$.
We say that $\psi$ is a local consequence of a set of formulas $\Gamma$, denoted $\Gamma \models_\ell \psi$, if for every model $\mM$ and state $w$, if $\mM,w \models \Gamma$ then $\mM,w \models \psi$.
As is the case for the logic $\mathsf{BK}$ (\cite{wansing}), the logic \fhl\ does not enjoy the Deduction theorem for global consequence, \emph{i.e.}, it is not the case that $\varphi\models_g \psi$ if and only if $\models \varphi\rightarrow \psi$. However, note that both $\mathsf{BK}$ and \fhl\ enjoy the Deduction theorem for local consequence. For the remainder of the paper, whenever it is mentioned that a formula is a consequence of a set of formulas, we will be referring to the notion of \emph{global} consequence.

The counter-example that serves to prove that $\mathsf{BK}$ is not closed under the replacement rule is not a counter-example for \fhl\ (of course, it must be adapted to the semantics in \fhl; the details are below).
The replacement rule says that if $\varphi \leftrightarrow \psi$ is valid, then the formula $\chi(\varphi)\leftrightarrow \chi(\psi)$ is valid as well.
Both of the formulas $\neg (\varphi\rightarrow \psi)\leftrightarrow ({\sim}\neg\varphi \wedge \neg\psi)$ and \mbox{$\neg \neg (\varphi\rightarrow \psi) \leftrightarrow \neg (\sim\neg\varphi \wedge \neg\psi)$} are valid in \fhl.
Clearly, this is a bi-product of the choice for the semantics of $\rightarrow$ that has been made in this paper.
Indeed, \fhl\ is closed under the replacement rule. It is easier to check it under a $\mathbf{4}$-valued approach, so the details can be found in the next subsection.

Recall also that, being a paraconsistent and paracomplete logic, the schemas $(\varphi\wedge \neg\varphi)\rightarrow\bot$ and $\varphi \vee\neg\varphi$ are not valid. Nonetheless, the classical behaviour of the negation ${\sim}$ makes it obvious that $(\varphi\wedge {\sim}\varphi)\rightarrow\bot$ and $\varphi \vee{\sim}\varphi$ are both valid schemas.

\subsection*{A direct $\mathbf{4}$-perspective of \fhl}

An alternative definition of model uses functions $\rel_\pi: \Ww\times \Ww \rightarrow \mathbf{4}$ and\linebreak \mbox{$\val: (\Prop\cup\Nom)\times \Ww\rightarrow \mathbf{4}$}, where $\mathbf{4}=\{\mathrm{t}, \mathrm{f}, \mathrm{b}, \mathrm{n}\}$ is the set of Belnapian truth values: true, false, both true and false, and neither true nor false, respectively.
We say that an element $x\in\four$ yields positive evidence if and only if $x\in\{\mathrm{t},\mathrm{b}\}$ and that it yields negative evidence if and only if $x\in\{\mathrm{f},\mathrm{b}\}$.
Formally:

\begin{defn}[$\mathbf{4}$-model]\label{def:4structure}
	A \emph{$\mathbf{4}$-model} $\overline{\mM}$ is a tuple $(\Ww, \left(\rel_\pi \right)_{\pi\in\MOD}, \val)$, where:
	
	\begin{itemize}
		\setlength\itemsep{0cm}
		\renewcommand\labelitemi{$\logof$}
		\item $\Ww\neq\emptyset$ is the domain;
		\item $\rel_\pi$ is an \emph{accessibility function} such that $\rel_\pi:\Ww\times \Ww\rightarrow \mathbf{4}$ for each $\pi\in\MOD$;
		\item $\val$ is a \emph{valuation} with domain $(\Prop\cup\Nom)\times \Ww$ and range $\mathbf{4}$ such that $\val(i,w)=\mathrm{t}$ for a unique $w\in \Ww$ and $\val(i,w')=\mathrm{f}$ for every other state $w'$.
	\end{itemize}
\end{defn}

Belnap's four values may be arranged according to two partial orders: the first one, $\leq_\mathsf{t}$, reflects the ``\emph{quality}'' of the information, whereas the second, $\leq_\mathsf{k}$, reflects the ``\emph{quantity}'' of information. The bilattice structure is represented in Figure~\ref{fig:bilattices}.

\begin{figure}[H]
	\centering
	\includegraphics[width=0.3\textwidth]{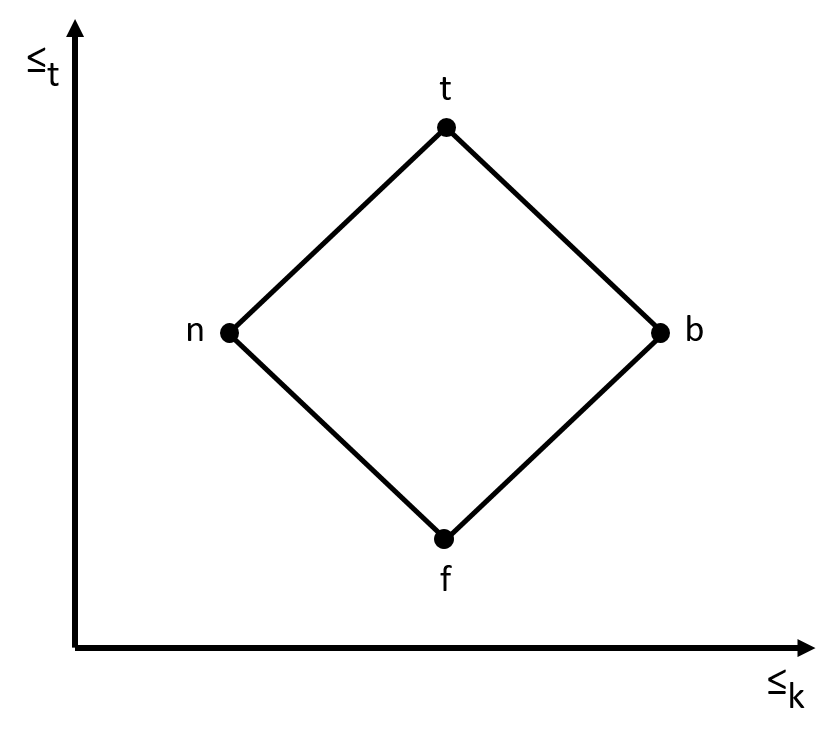}
	\caption{The two orderings of Belnap's bilattice.\label{fig:bilattices}}
\end{figure}

\begin{defn}[Satisfaction in $\mathbf{4}$-models]\label{def:sat_4struc}
	A satisfaction relation $\models_\mathbf{4}$ between a $\mathsf{4}$-model $\overline{\mM}$, a state $w$ and a formula $\varphi$ is defined as follows:
	
	\[\overline{\mM}, w \models_\mathbf{4} \varphi \Leftrightarrow \val(\varphi,w)\in\{\mathrm{t},\mathrm{b}\}\]
	
	\noindent where $\val$ is extended to all formulas as in Figure~\ref{fig:extended_val}.
	
	Globally satisfied and valid formulas are defined as in Definition~\ref{defn:satisfaction}.
\end{defn}

\begin{figure}[h!t]
	\centering
$\begin{array}{l}
	\val(\bot,w) =\mathrm{f}\\[0.5ex]
	
	\val(\neg \varphi,w) =\neg \val(\varphi,w), \text{ where } \neg \mathrm{t}=\mathrm{f},\ \neg \mathrm{f}=\mathrm{t},\ \neg \mathrm{b}=\mathrm{b} \text{ and } \neg \mathrm{n}=\mathrm{n}\\[0.5ex]
	
	\val({\sim} \varphi,w) ={\sim} \val(\varphi,w), \text{ where } {\sim} \mathrm{t}=\mathrm{f},\ {\sim} \mathrm{f}=\mathrm{t},\ {\sim} \mathrm{b}=\mathrm{n} \text{ and } {\sim} \mathrm{n}=\mathrm{b}\\[0.5ex]
	
	\val(\varphi\wedge\psi,w) =  \val(\varphi,w)\varowedge \val(\psi,w) = \underset{\leq_\mathsf{t}}{\mathrm{inf}}\ \{\val(\varphi,w), \val(\psi,w)\}\\[0.5ex]
	
	\val(\varphi\vee\psi,w)= \val(\varphi,w)\varovee \val(\psi,w) = \underset{\leq_\mathsf{t}}{\mathrm{sup}}\{\val(\varphi,w), \val(\psi,w)\}\\[0.5ex]
	
	\val(\varphi\rightarrow\psi,w) = \val(\varphi,w) \rhd \val(\psi,w) = {\sim}\val( \varphi,w) \varovee \val(\psi,w)\\[0.5ex]
	
	\val(@_i \varphi,w) = \val(\varphi,w'), \text{ where } w' \text{ is such that } \val(i,w')=\mathrm{t}\\[0.5ex]
	
	\val(\langle\pi\rangle\varphi,w) = \underset{\leq_\mathsf{t}}{\mathrm{sup}} \{\rel_\pi(w,w') \varowedge \val(\varphi,w'),\ w'\in \Ww\}\\[0.5ex]
	
	\val([\pi]\varphi,w) = \underset{\leq_\mathsf{t}}{\mathrm{inf}} \{{\sim} (\rel_\pi(w,w') \varowedge \val({\sim}\varphi,w')),\ w'\in \Ww\}\\
	\hphantom{\val([\pi]\varphi,w)} = \underset{\leq_\mathsf{t}}{\mathrm{inf}} \{{\sim}\rel_\pi(w,w') \varovee \val(\varphi,w'),\ w'\in \Ww\}\\
	\hphantom{\val([\pi]\varphi,w)} = \underset{\leq_\mathsf{t}}{\mathrm{inf}} \{\rel_\pi(w,w') \rhd \val(\varphi,w'),\ w'\in \Ww\}
\end{array}$
\caption{Extension of a $\mathbf{4}$-valuation to formulas.}\label{fig:extended_val}
\end{figure}

The meet and join in $(\four,\leq_\mathsf{t})$ yield a pessimistic, respectively optimistic, approach to evidence.
The meet of elements $x,y \in \four$ yields positive evidence if both $x$ and $y$ yield positive evidence, and yields negative evidence if either $x$ or $y$ does. Conversely the join yields positive evidence if either $x$ or $y$ does, and yields negative evidence if both $x$ and $y$ do.
This corresponds to the approach adopted for conjunction and disjunction in $\fhl$.

The representation of elements $x\in\four$ as pairs $(a,b)\in \{0,1\}^2$ such that $a=1$ iff $x\in\{\mathrm{t},\mathrm{b}\}$ (\emph{i.e.}, the first projection retains positive evidence) and $b=1$ iff $x\in\{\mathrm{f},\mathrm{b}\}$ (\emph{i.e.}, the second projection retains negative evidence) makes the behaviour of these operations clearer.

Additionally, when viewed as a pair, the paraconsistent negation in $\fhl$, $\neg$, of an element in $\four$ switches the positions of positive and negative evidence, (\emph{i.e.}, $\neg(a,b)=(b,a)$); the classical negation, $\sim$, inverts the value in each position of the pair (\emph{i.e.}, ${\sim}(a,b) = (1-a,1-b)$). In \cite{defaultneg}, $\neg$ is simply referred to as ``negation'', while $\sim$ is called ``default negation''.

This direct approach to satisfaction puts in evidence the spontaneity of the choice for the semantics of $\rightarrow$-formulas. First, the semantics of diamond-formulas is structurally the same as in the classical case. The semantics of box-formulas is constructed resorting to the equivalence $[\pi]\varphi\equiv {\sim}\langle\pi\rangle{\sim}\varphi$ (briefly hinted at previously). By analogy with the classical structure of the semantics of box-formulas, the use of a new implication (as mentioned before, it differs from the $\mathbf{4}$-valued weak and strong implications in \cite{wansing,Riv1}) is thus justified. Obviously, $\rightarrow$ and $\wedge$ constitute a residuated pair in $\mathbf{4}$ with respect to $\leq_\mathsf{t}$.

Going back to the observation about the replacement rule, note that valid formulas have value $\mathsf{t}$. A formula $\varphi \leftrightarrow \psi$ has value $\mathsf{t}$ if and only if the formulas $\varphi$ and $\psi$ have exactly the same value (which is not the case for either weak nor strong implication). Therefore, it is expected that $\chi(\varphi)\leftrightarrow \chi(\psi)$ is valid as well.

Observe that for nominals $i$ and $j$ and an arbitrary state $w$,

\[\val(@_i\langle\pi\rangle j, w)=\rel_\pi(w',w''),\]

\noindent where $w'$ and $w''$ are such that $\val(i,w')=\val(j,w'')=\mathrm{t}$.

\begin{defn}[Equivalent models]\label{defn:equiv_models}
	A model $\mM$ and a $\mathbf{4}$-model $\overline{\mM}$ are equivalent when their domains coincide and for all $w,w'\in \Ww$, $\pi\in\MOD$, $p\in\Prop$, \mbox{$i\in\Nom$}, the equivalences in Figure~\ref{fig:equiv_models} hold.

\begin{figure}[h!t]
	\centering
	$\begin{array}{lll}
		\rel_\pi(w,w')\in\{\mathrm{t},\mathrm{b}\} & \iff & (w,w')\in \rpos_\pi\\[0.5ex]
		\rel_\pi(w,w')\in\{\mathrm{f},\mathrm{b}\} & \iff & (w,w')\in \rneg_\pi\\[0.5ex]
		\val(p,w)\in\{\mathrm{t},\mathrm{b}\} & \iff & w\in \vpos(p)\\[0.5ex]
		\val(p,w)\in\{\mathrm{f},\mathrm{b}\} & \iff & w\in \vneg(p)\\[0.5ex]
		\val(i,w)=\mathrm{t} & \iff & w=N(i)\\[0.5ex]
		\val(i,w)=\mathrm{f} & \iff & w\neq N(i)\\
	\end{array}$
	\caption{Requirements for the equivalence between a $\mathbf{4}$-model and a model.}\label{fig:equiv_models}
\end{figure}
\end{defn}

The following result holds:

\begin{lem}\label{lem:equiv_semantics}
	For a model $\mM$ and a $\mathbf{4}$-model $\overline{\mM}$ such that $\mM$ and $\overline{\mM}$ are equivalent, it is the case that:
	
	$$\overline{\mM}, w \models_\mathbf{4} \varphi \Leftrightarrow \mM, w\models \varphi,\text{ for all } \varphi\in\Fm.$$
\end{lem}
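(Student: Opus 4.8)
The plan is to prove a slightly stronger, \emph{bilateral} statement by structural induction on $\varphi$: for every state $w$,
\[ \mM, w\models\varphi \iff \val(\varphi,w)\in\{\mathrm{t},\mathrm{b}\} \qquad\text{and}\qquad \mM, w\models\neg\varphi \iff \val(\varphi,w)\in\{\mathrm{f},\mathrm{b}\}. \]
The two halves must travel together, because the clauses of Figure~\ref{fig:satisfaction} for a negated compound ($\neg(\varphi\wedge\psi)$, $\neg(\varphi\rightarrow\psi)$, $\neg\langle\pi\rangle\varphi$, $\neg[\pi]\varphi$, \ldots) are phrased using $\neg$ applied to the \emph{immediate subformulas}, so once both claims are carried along the recursion is genuinely on subformula structure. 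Moreover, by $\val(\neg\varphi,w)=\neg\val(\varphi,w)$ together with the table for $\neg$ on $\mathbf{4}$ (which swaps $\mathrm{t}$ and $\mathrm{f}$ and fixes $\mathrm{b},\mathrm{n}$) one has $\neg a\in\{\mathrm{t},\mathrm{b}\}\iff a\in\{\mathrm{f},\mathrm{b}\}$; hence the second claim for $\varphi$ is literally the first claim for $\neg\varphi$, and the clause $\mM,w\models\neg\neg\varphi\iff\mM,w\models\varphi$ dispatches the case $\varphi=\neg\psi$ using the induction hypothesis on $\psi$.

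Before the induction I would isolate two elementary facts about the lattice $(\mathbf{4},\leq_\mathsf{t})$ that power every case. First, $\{\mathrm{t},\mathrm{b}\}$ is a prime filter and $\{\mathrm{f},\mathrm{b}\}$ a prime ideal of $(\mathbf{4},\leq_\mathsf{t})$; consequently, for any family of elements of $\mathbf{4}$ (in particular for any subset of $\mathbf{4}$, which has at most four elements, so no infinitary issue arises) its $\sup_{\leq_\mathsf{t}}$ lies in $\{\mathrm{t},\mathrm{b}\}$ iff some member does and lies in $\{\mathrm{f},\mathrm{b}\}$ iff every member does, and dually for $\inf_{\leq_\mathsf{t}}$; the empty $\sup$/$\inf$ evaluate to $\mathrm{f}$/$\mathrm{t}$, matching the vacuously-quantified clauses. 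Second, from the tables for $\neg$ and ${\sim}$ one reads off ${\sim}a\in\{\mathrm{t},\mathrm{b}\}\iff a\notin\{\mathrm{t},\mathrm{b}\}$ and ${\sim}a\in\{\mathrm{f},\mathrm{b}\}\iff a\notin\{\mathrm{f},\mathrm{b}\}$, i.e.\ ${\sim}$ is Boolean complementation with respect to both of these sets.

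With these lemmas the bulk of the cases is bookkeeping against Figures~\ref{fig:satisfaction}, \ref{fig:extended_val} and~\ref{fig:equiv_models}. The atomic cases $p$, $i$, $\bot$ are exactly the equivalence requirements of Figure~\ref{fig:equiv_models} together with $\val(\bot,w)=\mathrm{f}$. For $\wedge$ and $\vee$ I combine the $\inf$/$\sup$ description of $\val$ with the prime-filter/prime-ideal fact and the induction hypothesis. For $\rightarrow$ I additionally use $\val(\varphi\rightarrow\psi,w)={\sim}\val(\varphi,w)\varovee\val(\psi,w)$ and the ${\sim}$-complementation facts: this is precisely what converts ``$\mM,w\models\varphi$ implies $\mM,w\models\psi$'' into the membership statement, and ``$\mM,w\not\models\neg\varphi$ and $\mM,w\models\neg\psi$'' into $\val(\varphi\rightarrow\psi,w)\in\{\mathrm{f},\mathrm{b}\}$. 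For $@_i\varphi$ I use that in equivalent models the unique state with $\val(i,\cdot)=\mathrm{t}$ is the state named by $i$ (Figure~\ref{fig:equiv_models}), so everything reduces to the induction hypothesis at that state. The modal cases are where the non-classical design surfaces: for $\langle\pi\rangle\varphi$ the $\sup$/prime-filter computation together with $\rel_\pi(w,w')\in\{\mathrm{t},\mathrm{b}\}\iff(w,w')\in\rpos_\pi$ reproduces the existential clause over $\rpos_\pi$; for $[\pi]\varphi$ the $\inf$ of $\rel_\pi(w,w')\rhd\val(\varphi,w')$ plus ${\sim}$-complementation reproduces the universal clause over $\rpos_\pi$; and for the negated modalities the same computations are run through the falsity ideal $\{\mathrm{f},\mathrm{b}\}$ and the equivalence $\rel_\pi(w,w')\in\{\mathrm{f},\mathrm{b}\}\iff(w,w')\in\rneg_\pi$, which turns the quantifiers into quantifiers over the complement relation $(\rneg_\pi)^c$.

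The one step I would treat with real care is the negated box $\neg[\pi]\varphi$, since it stacks all the moving parts: $\val([\pi]\varphi,w)\in\{\mathrm{f},\mathrm{b}\}$ must be unwound as ``$\inf$ in the prime ideal $\{\mathrm{f},\mathrm{b}\}$, hence some summand $\rel_\pi(w,w')\rhd\val(\varphi,w')$ lies in $\{\mathrm{f},\mathrm{b}\}$'', then as ``a join lies in an ideal, hence both ${\sim}\rel_\pi(w,w')$ and $\val(\varphi,w')$ lie in $\{\mathrm{f},\mathrm{b}\}$'', then via ${\sim}$-complementation as ``$(w,w')\notin\rneg_\pi$ and $\val(\varphi,w')\in\{\mathrm{f},\mathrm{b}\}$'', which by the induction hypothesis is exactly $\exists w'\,\big((w,w')\in(\rneg_\pi)^c\text{ and }\mM,w'\models\neg\varphi\big)$. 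The potential for error here is purely in the bookkeeping — keeping straight which two-element subset of $\mathbf{4}$ is a filter and which an ideal, and whether it is $\neg$ or ${\sim}$ that acts as the relevant complementation — and not in any genuine mathematical difficulty.
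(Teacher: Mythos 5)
Your proposal is correct, and since the paper states Lemma~\ref{lem:equiv_semantics} without giving a proof, your argument is exactly the routine one the paper leaves implicit: the bilateral strengthening (carrying the claims for $\varphi$ and $\neg\varphi$ together, so that clauses such as the one for $\neg(\psi\wedge\delta)$ really do recurse on subformulas) combined with the observations that $\{\mathrm{t},\mathrm{b}\}$ is a prime filter and $\{\mathrm{f},\mathrm{b}\}$ a prime ideal of $(\mathbf{4},\leq_\mathsf{t})$ and that ${\sim}$ acts as complementation with respect to both, is the standard way to make the induction go through. The individual case computations you sketch, including the delicate unwinding of $\neg[\pi]\varphi$ through the ideal and ${\sim}$-complementation into the clause over $(\rneg_\pi)^c$, all check out against Figures~\ref{fig:satisfaction}, \ref{fig:extended_val} and~\ref{fig:equiv_models}.
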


\fhl\ is closely related to the Double-Belnapian Hybrid Logic \dbhl$*$, for which there is a terminating, sound and complete tableau system \cite{jlamppaper}. Even though we will not provide details, we can construct an analogous system for \fhl\ by updating some of the rules of the calculus for \dbhl$*$.

In a nutshell, there are three key aspects that distinguish between \fhl\ and \dbhl$*$, namely:
\begin{enumerate}
	\item In \dbhl$*$ the implication used is weak implication \cite{wansing,Riv1}; there is no room for the definition of a classical negation built from implication, as we do have in \fhl with $\sim$.
	\item In \dbhl$*$ the semantics for disjunction involves the so-called disjunctive syllogism, meaning that for a disjunction to hold, not only one of the disjuncts has to, but also in case the negation of a disjunct holds, then the other disjunct must hold as well.
	\item The major differentiating aspect is the semantics for negative modal formulas, although both systems agree on the semantics for positive modal formulas. In \fhl\ and \dbhl$*$, the semantics of positive modal formulas requires the non-modal part of the formula to be at least true (meaning that the non-modal part could be evaluated as (only) true or both (true and false)). In \fhl, for negative modal formulas it is required that the non-modal part is at most true (meaning that it could be evaluated as (only) true or neither (true nor false)). On the other hand, in \dbhl$*$ the provision for negative modal formulas is that the non-modal part is also at least true.
\end{enumerate}

\section{\fdl: a dynamic extension of \fhl}

Usually what we call modalities in Modal and Hybrid Logics is designated as atomic actions in a Dynamic Logic environment. We will follow that terminology in this section and denote the set of atomic actions by $\Act$ and its elements by $a, b, c,$ \emph{etc}.
The language $\lL_{dyn}$ is a variant of the language of \fhl, containing two kinds of expressions, namely, programs $\alpha$ and formulas $\varphi$ as follows:

$$\begin{array}{rl}
\alpha::= & a \mid \alpha;\alpha \mid \alpha\cup\alpha \mid \alpha^* \mid \varphi?\\
\varphi::= & p \mid i \mid \bot \mid \neg\varphi \mid \varphi\wedge\varphi \mid \varphi\vee\varphi \mid \varphi\rightarrow\varphi \mid [\alpha]\varphi \mid \langle\alpha\rangle\varphi
\end{array}$$

\noindent where $a\in\Act$, $p\in\Prop$, $i\in\Nom$.

The program $\alpha;\beta$ is a sequential program that runs first $\alpha$, then $\beta$. The program $\alpha\cup\beta$ non-deterministically chooses between $\alpha$ and $\beta$. The program $\alpha^*$ iteratively runs $\alpha$ a non-deterministically chosen number of times. The test program $\varphi?$ verifies if $\varphi$ holds at the current state and if so, proceeds.

\begin{defn}[Program interpretation]\label{defn:programs_interpr}
	The positive interpretations of composite programs are built as follows:
	
	\begin{itemize}
		\setlength\itemsep{0cm}
		\renewcommand\labelitemi{$\logof$}
		\item $\rpos_{\alpha;\beta}$ is the composition of $\rpos_\alpha$ and $\rpos_\beta$;
		\item $\rpos_{\alpha\cup\beta}$ is the union of $\rpos_\alpha$ and $\rpos_\beta$;
		\item $\rpos_{\alpha^*}$ is the reflexive transitive closure of $\rpos_\alpha$;
		\item $\rpos_{\varphi?}$ is defined as the set $\{(w,w)\ |\ \mM, w\models\varphi\}$.
	\end{itemize}
	
	As for negative interpretations, we define their complements; the following constructions are actually analogous to the previous ones:
	
	\begin{itemize}
		\setlength\itemsep{0cm}
		\renewcommand\labelitemi{$\logof$}
		\item $(\rneg_{\alpha;\beta})^c$ is the composition of $(\rneg_\alpha)^c$ and $(\rneg_\beta)^c$;
		\item $(\rneg_{\alpha\cup\beta})^c$ is the union of $(\rneg_\alpha)^c$ and $(\rneg_\beta)^c$;
		\item $(\rneg_{\alpha^*})^c$ is the reflexive transitive closure of $(\rneg_\alpha)^c$;
		\item $(\rneg_{\varphi?})^c$  is defined as the set $\{(w,w)\ |\ \mM, w\not\models \neg\varphi\}$.
	\end{itemize}
	
\end{defn}

	It is straightforward to observe that the positive relations and the complement of negative relations associated with composite programs $;,\cup,-^*$ are built in exactly the same way as in classical \pdl.
	
	The positive and negative versions of the relation associated with the test program however, hold a closer relation. These types of relation are not defined with respect to the accessibility relations, but instead depend on the local satisfaction of a particular formula. Thus note that $(\rneg_{\varphi?})^c= \{(w,w)\ |\ w\in \Ww\}\backslash \rpos_{(\neg\varphi)?}$.
	
	Semantics is easily adapted from Definition \ref{defn:satisfaction} to deal with the use of composite programs in place of the modalities in modal formulas. So, for a model $(\Ww, \famrposact, \famrnegact, \nval, \vpos, \vneg)$ (as in Definition~\ref{defn:model}, except that we update the terminology and use $\Act$ instead of $\MOD$), semantics of modal formulas is defined as one would expect:
	
	\begin{itemize}
	\item $\mM, w \models \langle\alpha\rangle \varphi  \Leftrightarrow   \exists w'(w\rpos_\alpha w'\ \text{and } \mM,w'\models \varphi)$;
	\item $\mM, w \models \neg\langle\alpha\rangle \varphi \Leftrightarrow    \forall w'(w(\rneg_\alpha)^c w' \text{ implies } \mM,w'\models \neg\varphi)$;
	\item $\mM, w \models [\alpha] \varphi \Leftrightarrow   \forall w'(w\rpos_\alpha w'\ \text{implies } \mM, w'\models \varphi)$;
	\item $\mM, w \models \neg[\alpha] \varphi \Leftrightarrow   \exists w'(w(\mathrm{R}^-_\alpha)^c w'\ \text{and } \mM,w'\models \neg \varphi)$.
	\end{itemize}

	The interpretation of composite programs is such that the axioms of \pdl\ hold in \fdl, making it a \emph{truly Dynamic Logic}.
	
	\begin{lem}\label{lem:dynamic_axioms}
	The following schemes hold in every model:
	\begin{enumerate}
		\item $[\alpha;\beta]\varphi \fequiv [\alpha][\beta]\varphi$
		\item $[\alpha\cup\beta]\varphi \fequiv [\alpha]\varphi \wedge [\beta]\varphi$
		\item $[\varphi?]\psi \fequiv \varphi\rightarrow\psi$
		\item $[\alpha^*]\varphi \fequiv \varphi\wedge ([\alpha][\alpha^*]\varphi)$
	\end{enumerate}

	The derivations with $\langle-\rangle$ can also be verified: 
	\begin{enumerate}
		\item[1'.] $\langle\alpha;\beta\rangle \fequiv \langle \alpha\rangle\langle\beta\rangle \varphi$
		\item[2'.] $\langle\alpha\cup\beta\rangle\varphi \fequiv \langle\alpha\rangle \varphi \vee \langle\beta\rangle\varphi$
		\item[3'.] $\langle\varphi?\rangle \psi\fequiv \varphi\wedge \psi$
		\item[4'.] $\langle\alpha^*\rangle\varphi \fequiv \varphi \vee (\langle\alpha\rangle\langle\alpha^*\rangle \varphi)$
	\end{enumerate}
	\end{lem}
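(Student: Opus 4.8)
The plan is to prove each $\fequiv$ by establishing separately the two equivalences it abbreviates: the \emph{positive} equivalence between the two formulas themselves, and the \emph{negative} equivalence between their paraconsistent negations. For the positive equivalences one unfolds the satisfaction clauses for $\langle\alpha\rangle$ and $[\alpha]$ and substitutes the recursive definitions of $\rpos_{\alpha;\beta}$, $\rpos_{\alpha\cup\beta}$, $\rpos_{\alpha^*}$ and $\rpos_{\varphi?}$ from Definition~\ref{defn:programs_interpr}. Since the positive accessibility relations of composite programs are assembled exactly as in classical \pdl, and those clauses are the classical clauses for $\langle\cdot\rangle$ and $[\cdot]$ over $\rpos$, these computations are verbatim the classical \pdl\ arguments: relational composition gives item~1, union gives item~2, and the one-step unfolding $\rpos_{\alpha^*}=\{(w,w):w\in\Ww\}\cup(\rpos_\alpha\circ\rpos_{\alpha^*})$ gives item~4; item~3 is immediate since $\mM,w\models[\varphi?]\psi$ iff $\mM,w\models\varphi$ implies $\mM,w\models\psi$, i.e.\ iff $\mM,w\models\varphi\rightarrow\psi$.

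For the negative equivalences the key observation is that the clauses for $\neg\langle\alpha\rangle\varphi$ and $\neg[\alpha]\varphi$ are, structurally, the classical clauses for the \emph{dual} operator evaluated over $(\rneg_\alpha)^c$ with $\neg\varphi$ in place of $\varphi$ --- $\neg\langle\alpha\rangle$ behaves like a box over $(\rneg_\alpha)^c$ and $\neg[\alpha]$ like a diamond over $(\rneg_\alpha)^c$. Because Definition~\ref{defn:programs_interpr} builds $(\rneg_{\alpha;\beta})^c$, $(\rneg_{\alpha\cup\beta})^c$ and $(\rneg_{\alpha^*})^c$ from the atomic complements $(\rneg_a)^c$ by exactly the \pdl\ recipe, the same computations go through, now also invoking the negative propositional clauses of Figure~\ref{fig:satisfaction} (e.g.\ $\neg(\chi_1\wedge\chi_2)\equiv\neg\chi_1\vee\neg\chi_2$) to match the resulting disjunctions: expanding the composition of $(\rneg_\alpha)^c$ and $(\rneg_\beta)^c$ in $\mM,w\models\neg[\alpha;\beta]\varphi$ yields $\mM,w\models\neg[\alpha][\beta]\varphi$, and the union case and the one-step unfolding of $(\rneg_{\alpha^*})^c$ likewise reproduce items~2 and~4. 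The test (item~3) is the one case where the positive and negative test relations are not literally complements of one another, so it is checked directly: $\mM,w\models\neg[\varphi?]\psi$ iff $\mM,w\not\models\neg\varphi$ and $\mM,w\models\neg\psi$, which is precisely the clause for $\neg(\varphi\rightarrow\psi)$.

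The derived $\langle\cdot\rangle$-schemes~1'--4' can be done in the same direct style, or more economically deduced from~1--4: the argument of Section~2 that ${\sim}\langle\pi\rangle\varphi\fequiv[\pi]{\sim}\varphi$ and $\langle\pi\rangle\varphi\fequiv{\sim}[\pi]{\sim}\varphi$ depends only on the clauses for $\rpos$ and $(\rneg)^c$ and so holds with $\alpha$ in place of $\pi$; then 1'--4' follow by pushing ${\sim}$ inward, using that ${\sim}$ is classical over $\wedge,\vee$ and the involutivity of $\neg$. For 3' one additionally uses that $\lnot P\rightarrow Q$ is equivalent to $P\vee Q$. The only step demanding genuine care is the iteration case, items~4 and~4', and in particular its negative half: $\neg[\alpha^*]\varphi$ is an \emph{existential} statement over the reflexive transitive closure $(\rneg_\alpha)^c$, so one must verify that the one-step unfolding of that closure interacts correctly with the existential quantifier in the negative-box clause and with the disjunction produced by $\neg$ distributing over $\varphi\wedge[\alpha][\alpha^*]\varphi$ --- and dually that the \emph{universal} statement $\neg\langle\alpha^*\rangle\varphi$ over $(\rneg_{\alpha^*})^c$ matches $\neg(\varphi\vee\langle\alpha\rangle\langle\alpha^*\rangle\varphi)$. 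Everything else is bookkeeping mirroring the classical \pdl\ identities, which is expected since both $\rpos_\alpha$ and $(\rneg_\alpha)^c$ are built exactly as relations are built in \pdl.
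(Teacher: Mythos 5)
Your proposal is correct and follows essentially the paper's own route: for each scheme it splits $\fequiv$ into the positive equivalence and the equivalence of the $\neg$-versions, and verifies both by unfolding the satisfaction clauses against Definition~\ref{defn:programs_interpr}, exploiting that $\rpos_\alpha$ and $(\rneg_\alpha)^c$ for composite programs are built exactly as in classical \pdl\ (with the test case checked directly), which is precisely the computation the paper carries out for items 1, 3 and 4. Your shortcut of deducing $1'$--$4'$ from $1$--$4$ via ${\sim}\langle\alpha\rangle\varphi\fequiv[\alpha]{\sim}\varphi$ is a reasonable variant of the paper's unproved ``can also be verified'' claim, provided you use the classical negation ${\sim}$ (not the paraconsistent $\neg$, for which $\neg P\rightarrow Q$ is \emph{not} equivalent to $P\vee Q$) and note that replacement of $\fequiv$-equivalents extends to \fdl\ contexts.
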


	\begin{proof}	
		Let $\mM$ be a model and $w$ a state in $\Ww$. Then:
		\begin{enumerate}
			\item[1.] $\vphantom{\ }$
			
			$\begin{array}{rcl}
			\mM, w \models [\alpha;\beta]\varphi & \iff & \forall w'(w\rpos_{\alpha;\beta}w' \text{ implies } \mM, w'\models \varphi)\\
			& \iff & \forall w'(\exists u (w\rpos_\alpha u \text{ and } u\rpos_\beta w') \text{ implies } \mM, w' \models \varphi)\\
			& \iff & \forall w'(\forall u (w(\rpos_\alpha)^c u \text{ or } u(\rpos_\beta)^c w') \text{ or } \mM, w' \models \varphi)\\
			& \iff & \forall u,w' (w (\rpos_\alpha)^c u \text{ or } u (\rpos_\beta)^c w' \text{ or } \mM, w' \models \varphi)\\
			& \iff & \forall u (w (\rpos_\alpha)^c u \text{ or } \forall w'(u (\rpos_\beta)^c w' \text{ or } \mM, w' \models \varphi))\\
			& \iff & \forall u (w\rpos_\alpha u \text{ implies } \forall w'(u \rpos_\beta w' \text{ implies } \mM, w' \models \varphi))\\
			& \iff & \forall u (w\rpos_\alpha u \text{ implies } \mM, u\models[\beta]\varphi)\\
			& \iff & \mM, w \models [\alpha][\beta]\varphi\\
			& & \\
			& & \\
			\mM, w \models \neg[\alpha;\beta]\varphi & \iff & \exists w' (w (\rneg_{\alpha;\beta})^c w' \text{ and } \mM, w'\models \neg\varphi)\\
			& \iff & \exists w' (\exists u (w (\rneg_{\alpha})^c u \text{ and } u (\rneg_{\beta})^c w') \text{ and } \mM, w'\models \neg\varphi)\\
			& \iff & \exists u, w' (w (\rneg_{\alpha})^c u \text{ and } u (\rneg_{\beta})^c w' \text{ and } \mM, w' \models \neg\varphi)\\
			& \iff & \exists u( w (\rneg_{\alpha})^c u \text{ and } \exists w'(u (\rneg_{\beta})^c w' \text{ and } \mM, w' \models \neg\varphi))\\
			& \iff & \exists u( w (\rneg_{\alpha})^c u \text{ and } \mM, u\models\neg [\beta]\varphi)\\
			& \iff & \mM, w \models \neg[\alpha][\beta]\varphi
			\end{array}$
		\end{enumerate}
		
			
		We skip 2. as it follows analogous steps.

	\begin{enumerate}
		\item[3.] $\vphantom{\ }$
	
		$\begin{array}{rcl}
			\mM, w \models [\varphi?]\psi & \iff & \forall w'(	w\rpos_{\varphi?}w' \text{ implies } \mM, w'\models \psi)\\
			& \iff & \mM, w\models\varphi \text{ implies } \mM, w\models \psi\\
			& \iff & \mM, w \models \varphi\rightarrow \psi\\
			& & \\
			& & \\
			\mM, w \models \neg[\varphi?]\psi & \iff & \exists w'( w 	(\rneg_{\varphi?})^c w' \text{ and } \mM, w'\models \neg\psi)\\
			& \iff & \mM,w\not\models \neg\varphi \text{ and } \mM, w\models 	\neg\psi\\
			& \iff & \mM, w \models \neg(\varphi\rightarrow \psi)
		\end{array}$
	\end{enumerate}

%
			
	\begin{enumerate}
		\item[4.] $\vphantom{\ }$
		
		$\begin{array}{rcl}
			\mM, w \models [ \alpha^* ] \varphi & \iff & \forall w'( 	w\rpos_{\alpha^*}w' \text{ implies } \mM, w'\models \varphi)\\
			& \iff & (w\rpos_{\alpha^*}w \text{ implies } \mM, w\models \varphi) \text{ and}\\
			& & (\forall w'(\exists u(w\rpos_{\alpha} u \text{ and } u\rpos_{\alpha^*}w') \text{ implies } \mM, w'\models \varphi))\\
			& \iff & \mM, w\models \varphi \text{ and } \forall w'(\exists u(w\rpos_{\alpha} u \text{ and } u\rpos_{\alpha^*}w') \text{ implies } \mM, w'\models \varphi)\\
			& \iff & \mM, w\models \varphi \text{ and } \forall w',u(w(\rpos_{\alpha})^c u \text{ or } u(\rpos_{\alpha^*})^c w' \text{ or } \mM, w'\models \varphi)\\
			& \iff & \mM, w\models \varphi \text{ and } \forall u(w(\rpos_{\alpha})^c u \text{ or } \forall w'(u\rpos_{\alpha^*} w' \text{ implies } \mM, w'\models \varphi))\\
			& \iff & \mM, w\models \varphi \text{ and } \forall u(w\rpos_{\alpha} u \text{ implies } \mM,u\models [\alpha^*]\varphi)\\
			& \iff & \mM, w\models \varphi \text{ and } \mM,w\models [\alpha][\alpha^*]\varphi)\\
			& \iff & \mM, w\models \varphi\wedge [\alpha][\alpha^*]\varphi)\\
			& & \\
			& & \\
			\mM, w \models \neg [ \alpha^* ] \varphi & \iff & \exists w'( w (\rneg_{\alpha^*})^c w' \text{ and } \mM, w'\models \neg\varphi)\\
			& \iff &  (w (\rneg_{\alpha^*})^c w \text{ and } \mM, w\models \neg\varphi) \text{ or}\\
			& & \exists w',w''(w (\rneg_{\alpha})^c w'' \text{ and } w'' (\rneg_{\alpha^*})^c w' \text{ and }\mM, w'\models \neg \varphi)\\
			& \iff &  \mM, w\models \neg\varphi \text{ or } \exists w''(w (\rneg_{\alpha})^c w'' \text{ and } \mM, w''\models \neg[\alpha^*] \varphi)\\
			& \iff &  \mM, w\models \neg\varphi \text{ or } \mM, w\models \neg[\alpha][\alpha^*] \varphi\\
			& \iff & \mM, w \models \neg(\varphi\wedge [\alpha][\alpha^*] \varphi)
		\end{array}$
	\end{enumerate}
		
%
			
%

\end{proof}
	
	Consider the following example:
	
	\begin{ex}\label{ex:dynamic}
		Consider the model $\mM$ represented in Figure~\ref{fig:dynamic}. The following holds:
		
		\begin{figure}[ht]
			\centering
			\includegraphics[width=0.4\textwidth]{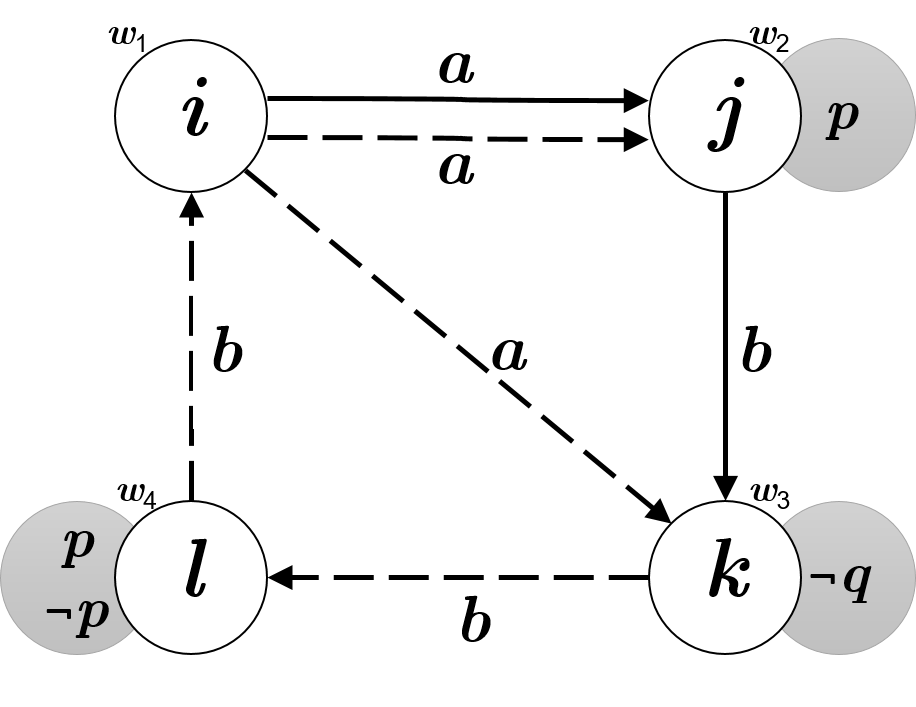}
			\caption{Graphical representation of the model $\mM$.\label{fig:dynamic}}
		\end{figure}
		
		$$\begin{array}{l}
			1.\ \mM \models @_i [a\cup b] p\\
			2.\ \mM \models @_i \neg\langle a;b\rangle (\neg p\wedge q)\\
			3.\ \mM \models @_i \langle a^*\rangle j\\
			4.\ \mM \models @_k \neg[p?] q
		\end{array}$$
		
		The equivalences proved in Lemma~\ref{lem:dynamic_axioms} simplify the previous inferences as follows:
		
		$$\begin{array}{rcl}
			1. & \iff & \mM \models @_i [a] p \wedge @_i [b] p\\
			2. & \iff & \mM \models @_i \neg \langle a\rangle  \langle b\rangle (\neg p \wedge q)\\
			3. & \iff & \mM \models @_i j \vee @_i\langle a\rangle \langle a^* \rangle j\\
			4. & \iff & \mM \models @_k \neg (p\rightarrow q)
		\end{array}$$
	\end{ex}
	
\section{A tableau system for \fdl}
	This section introduces a terminating, sound and complete tableau system for \fdl\ and a decision procedure to verify if a formula is a (global) consequence of a set of formulas. In particular, we can verify if a formula is valid if it is a consequence of the empty set. But first, it is necessary to fix some notation. We start by considering an extra-logical operator $.^-$ applied to formulas and which acts on the satisfaction relation in the following sense: for a model $\mM$, a state $w$ and a formula $\varphi\in\Fm(\lL_{dyn})$,
	
	\[\mM,w \models \varphi^-  \Leftrightarrow \mM, w\not\models \varphi\]
	
	\noindent and, analogously,
	
	\[\mM\models \varphi^- \Leftrightarrow \mM\not\models \varphi.\]
	
	It easy to check that $\mM\models \varphi^-$ if and only if $\exists w\in \Ww: \mM,w\models \varphi^-$. For convenience we will call $\varphi^-$ a \emph{minus-formula}, and the set $\Fm^\bullet(\lL_{dyn})=\Fm(\lL_{dyn})\cup\{\varphi^-\ |\ \varphi\in\Fm(\lL_{dyn})\}$ the set of all signed formulas over $\lL_{dyn}$.
	
	The tableau rules are clustered into two sets. First, we introduce rules for regular and minus-formulas in Figure \ref{fig:rules}.
	Note that the rules directly dealing with modal operators are only applicable if the program is atomic.
	The tableau rules for modal formulas with composite programs can be found in Figure \ref{fig:rules_composite_progs}. Those rules are written in a condensed format which should be interpreted as follows: whenever the premise includes the red and/or blue parts, the conclusion should include the red and/or blue parts accordingly. Take for example the condensed rule $(\textcolor{blue}{\neg}[\alpha;\beta]^{\textcolor{red}{-}})$, from which we can extract the following four rules:
	
	{\small
	\begin{center}
	$\begin{array}{c}
					\displaystyle \frac{@_i [\alpha;\beta]\varphi}{@_i [\alpha][\beta]\varphi}\;([\alpha;\beta])
					\qquad
					\displaystyle \frac{@_i \neg[\alpha;\beta]\varphi}{@_i \neg[\alpha][\beta]\varphi}\;(\neg[\alpha;\beta])
					\\
					\\
					\displaystyle \frac{(@_i [\alpha;\beta]\varphi)^-}{(@_i [\alpha][\beta]\varphi)^-}\;([\alpha;\beta]^-)
					\qquad
					\displaystyle \frac{(@_i \neg[\alpha;\beta]\varphi)^-}{(@_i \neg[\alpha][\beta]\varphi)^-}\;(\neg[\alpha;\beta]^-)
	\end{array}$
	\end{center}
    }

	The rules for the iteration program are an exception; they will play a crucial role in determining the nature of a branch and require a special treatment which shall be discussed in due time. These rules constitute our system, and a tableau here is denoted by $\tau$.

	To put it simply, a tableau consists of a rooted tree, each node of which is essentially a set of signed formulas. The initial (finite) set is constituted of what we call root formulas. A branch is any path in the tree starting at the root. The successors of a node result from the application of a rule to the formulas in the branch to which it belongs. A node without successors is called a leaf. Branching occurs when a rule splits the current branch into two.
	
	The rules $(@_\mathrm{I})$, $(\mathrm{Id})$, $(\mathrm{Nom})$, $([a])$, $(\neg\langle a\rangle)$, $({\langle a\rangle^-})$ and $({\neg[a]^-})$ are called \mbox{\emph{non-destructive}} and the remaining ones are called \emph{destructive}. This distinction is made so that in the systematic tableau construction algorithm a destructive rule is applied at most once to a formula (a destructive rule has exactly one formula in the premise; the converse is not true). As in \cite{hybridproof}, the classification of rules as destructive and non-destructive corresponds to a classification of formulas according to their form.
	
	Rules that introduce new nominals, $(\langle a\rangle), (\neg[a]), (\neg \langle a \rangle ^-)$ and $([a]^-)$, are called existential rules.
	
\begin{figure}[htpb!]
	\begin{minipage}{1.0\textwidth}
	{
	\small
	\noindent$\displaystyle\begin{array}{c}
	\displaystyle\frac{\varphi}{@_i \varphi}\;({@_\mathrm{I}})(\mathrm{i})
	\quad\quad
	\displaystyle\frac{@_i @_j \varphi}{@_j \varphi}\;({@_\mathrm{E}})
	\quad\quad
	\displaystyle \frac{@_i (\varphi\wedge \psi)}{\begin{aligned}
	@_i \varphi\\[-2\jot] @_i \psi\end{aligned}}\;({\wedge})
	\quad\quad
	\displaystyle \frac{@_i (\varphi\vee \psi)}{\begin{array}{c|c}
	@_i \varphi & @_i \varphi \end{array}}\;({\vee})\\\\
	\displaystyle \frac{@_i (\varphi\rightarrow\psi)}{\begin{array}{r|l}(@_i  \varphi)^- & @_i \psi	\end{array}}\;(\rightarrow)
	\quad\quad
	\displaystyle \frac{@_i [a] \varphi, @_i \langle a\rangle  j}{@_j \varphi}\;({[a]})
	\quad\quad
	\displaystyle \frac{@_i \langle a\rangle \varphi}{\begin{array}{c}
	@_i \langle a\rangle  t\\[-0.5\jot] @_t \varphi
	\end{array}}\;({\langle a\rangle})(\mathrm{ii})\\\\
	\displaystyle \frac{@_i\neg @_j\varphi}{@_j\neg\varphi}\;(\neg @)
	\quad\quad
	\displaystyle \frac{@_i\neg (\varphi\wedge\psi)}{\begin{array}{r|l}
	@_i\neg \varphi & @_i \neg\psi \end{array}}\;(\neg\wedge)
	\quad\quad
	\displaystyle \frac{@_i\neg(\varphi\vee\psi)}{\begin{aligned}
	@_i \neg\varphi\\[-2\jot] @_i \neg\psi\end{aligned}}\;(\neg\vee)\\\\
	\displaystyle \frac{@_i \neg(\varphi\rightarrow\psi)}{\begin{array}{c}
	(@_i\neg\varphi)^-\\ @_i\neg\psi\end{array}}\;(\neg{\rightarrow})
	\quad\quad
	\displaystyle\frac{@_i \neg[ a] \varphi}{\begin{array}{c}
	@_i\neg[ a]\neg t\\[-0.25\jot] @_t \neg \varphi\end{array}}\;(\neg[ a])(\mathrm{iii})
	\quad\quad
	\displaystyle \frac{@_i \neg\langle a\rangle \varphi, @_i\neg [ a]\neg j}{@_j\neg\varphi}\;(\neg\langle a\rangle)\\\\
	\displaystyle \frac{@_i \neg\neg\varphi}{@_i \varphi}\;(\neg\neg)
	\quad\quad
	\displaystyle \frac{@_i j,  @_i \varphi}{ @_j\varphi}\;({\mathrm{Nom}})(\mathrm{iv})
	\quad\quad
	\displaystyle \frac{}{@_i i}\;({\mathrm{Id}})(\mathrm{v})\\\\
	\end{array}$
	
	\vspace*{-0.5cm}
	\begin{longtable}[l]{rl}
		(i) & $\varphi$ is not a satisfaction statement, $i$ is in the branch;\\
		(ii) & $\varphi\notin\Nom$, $t$ is a new nominal;\\
		(iii) & $\varphi\neq \neg i$ for all $i\in\Nom$, $t$ is a new nominal;\\
		(iv) & for $@_i \varphi$ a literal;\\
		(v) & for $i$ in the branch.
	\end{longtable}}
\vspace*{-0.5cm}
	\subcaption{Tableau rules for regular formulas.}
	\label{fig:rules_regular}
\end{minipage}

\bigskip

\begin{minipage}{1.0\textwidth}
	{\small
	\centerline{$\begin{array}{c}
	\displaystyle \frac{ \varphi^-}{(@_t\varphi)^- }\;({@^-_\mathrm{I}})(\mathrm{vi}) 
	\quad\quad
	\displaystyle \frac{(@_i @_j \varphi)^-}{(@_j \varphi)^-}\;({@^-_\mathrm{E}})
	\quad\quad
	\displaystyle \frac{(@_i (\varphi\wedge \psi))^-}{\begin{array}{r|l}
	(@_i  \varphi)^- & (@_i \psi)^-	\end{array}}\;({\wedge^-})
	\quad\quad
	\displaystyle \frac{(@_i (\varphi\vee \psi))^-}{\begin{array}{c}
	(@_i  \varphi)^-\\ (@_i \psi)^- \end{array}}\;({\vee^-})\\\\
	\displaystyle \frac{(@_i (\varphi\rightarrow \psi))^-}{
	\begin{array}{c} @_i  \varphi\\ (@_i \psi)^-
	\end{array}}\;({\rightarrow^-})
	\quad\quad
	\displaystyle \frac{(@_i [ a] \varphi)^-}{\begin{array}{c}
	@_i \langle a\rangle t\\[-0.25\jot] (@_t \varphi)^-\end{array}
	}\;({[ a]^-})(\mathrm{vii})
	\quad\quad
	\displaystyle \frac{(@_i \langle a\rangle \varphi)^-, @_i \langle a\rangle  j}{(@_j \varphi)^-}\;({\langle a\rangle^-})\\\\
	\displaystyle \frac{(@_i \neg(@_j\varphi))^-}{(@_j\neg\varphi)^-}\;({\neg @^-})
	\quad\quad
	\displaystyle \frac{(@_i\neg (\varphi\wedge\psi))^-}{\begin{array}{c}
	(@_i\neg \varphi)^-\\ (@_i \neg\psi)^-\end{array}}\;({\neg\wedge^-})
	\quad\quad
	\displaystyle \frac{(@_i \neg(\varphi\vee \psi))^-}{\begin{array}{c|c}
	(@_i \neg\varphi)^- & (@_i \neg\varphi)^- \end{array}}\;({\neg\vee^-})\\\\
	\displaystyle \frac{(@_i\neg(\varphi\rightarrow\psi))^-}{\begin{array}{r|l}
	@_i\neg\varphi & (@_i\neg \psi)^-\end{array}}\;({\neg{\rightarrow}^-})
	\quad\quad
	\displaystyle \frac{(@_i \neg[ a] \varphi)^-, @_i\neg[ a]\neg j}{(@_j\neg  \varphi)^-}\;({\neg[ a]^-})
	\quad\quad
	\displaystyle \frac{(@_i \neg\langle a\rangle \varphi)^-}{
	\begin{array}{c} @_i\neg[ a]\neg t \\ (@_t\neg \varphi)^-
	\end{array}}\;({\neg\langle a\rangle^-})(\mathrm{vii})\\\\
	\displaystyle \frac{(@_i\neg\neg\varphi)^-}{(@_i \varphi)^-}\;({\neg\neg^-})
	\quad\quad
	\displaystyle \frac{(@_i\varphi)^-}{@_i{\neg}\varphi }\;({\mathrm{Id}^-})(\mathrm{viii})\\\\
\end{array}$}
	
	\vspace*{-0.5cm}
	\begin{longtable}[l]{rl}
		(vi) & $\varphi$ is not a satisfaction statement, $t$ is a new nominal;\\
		(vii) & $t$ is a new nominal;\\
		(viii) & $\varphi=j$ or $\varphi=\neg j$, where $j\in\Nom$.
	\end{longtable}}
	\subcaption{Tableau rules for minus-formulas.}
	\label{fig:rules_minus}
\end{minipage}
\caption{Tableau rules. (1/2)}
\label{fig:rules}
\end{figure}

\begin{figure}[ht]
	{\small
		\centerline{$\begin{array}{c}
			\displaystyle \frac{\redminus{@_i \blueneg{[\alpha;\beta]\varphi}}}{\redminus{@_i \blueneg{[\alpha][\beta]\varphi}}}\;(\textcolor{blue}{\neg}[\alpha;\beta]^{\textcolor{red}{-}})
			\quad\quad
			\displaystyle \frac{\redminus{@_i \blueneg{\langle\alpha;\beta\langle\varphi}}}{\redminus{@_i \blueneg{\langle\alpha\rangle\langle\beta\rangle\varphi}}}\;(\textcolor{blue}{\neg}\langle\alpha;\beta\rangle^{\textcolor{red}{-}})\\\\
			\displaystyle \frac{\redminus{@_i \blueneg{[\alpha\cup\beta]\varphi}}}{\redminus{@_i \blueneg{[\alpha]\varphi \wedge [\beta]\varphi}}}\;(\textcolor{blue}{\neg}[\alpha\cup\beta]^{\textcolor{red}{-}})
			\quad\quad
			\displaystyle \frac{\redminus{@_i \blueneg{\langle					\alpha\cup\beta\rangle\varphi}}}{\redminus{@_i \blueneg{\langle\alpha\rangle\varphi \vee \langle\beta\rangle\varphi}}}\;(\textcolor{blue}{\neg}\langle\alpha\cup\beta\rangle^{\textcolor{red}{-}})\\\\
			\displaystyle \frac{\redminus{@_i \blueneg{[\psi?]\varphi}}}{\redminus{@_i \blueneg{\psi\rightarrow \varphi}}}\;(\textcolor{blue}{\neg}[\psi?]^{\textcolor{red}{-}})
			\quad\quad
			\displaystyle\frac{\redminus{@_i \blueneg{\langle\psi?\rangle\varphi}}}{\redminus{@_i \blueneg{\psi\wedge \varphi}}}\;(\textcolor{blue}{\neg}\langle\psi?\rangle^{\textcolor{red}{-}})\\\\
			\multicolumn{1}{l}{\displaystyle \frac{\blueminus{@_i \blueneg{[\alpha^*]\varphi}}}{\begin{array}{c}
			\blueminus{@_i \blueneg{\varphi}}\\
			\blueminus{@_i \blueneg{[\alpha][\alpha^*]\varphi}}\end{array}}\;([\alpha^*]/\textcolor{blue}{\neg} [\alpha^*]^{\textcolor{blue}{-}})(\mathrm{ix})
			\quad
			\displaystyle \frac{@_i \neg[\alpha^*]\varphi}{\begin{array}{c|c}
			@_i \neg\varphi & (@_i \neg \varphi)^-\\
			& @_i \neg[\alpha][\alpha^*]\varphi\end{array}}\;(\neg[\alpha^*])
			\quad
			\displaystyle \frac{(@_i [\alpha^*]\varphi)^-}{\begin{array}{c|c}
			(@_i \varphi)^- & @_i \varphi\\
			& (@_i [\alpha][\alpha^*]\varphi)^-\end{array}}\;([\alpha^*]^-)
			}\\\\
			\multicolumn{1}{c}{
			\displaystyle \frac{@_i \langle\alpha^*\rangle\varphi}{\begin{array}{c|c}
			@_i \varphi & (@_i \varphi)^-\\
			& @_i \langle	\alpha\rangle\langle					\alpha^*\rangle\varphi \end{array}}\;(\langle					\alpha^*\rangle)
			\quad
			\displaystyle \frac{(@_i \neg\langle					\alpha^*\rangle\varphi)^-}{\begin{array}{c|c}
			(@_i \neg\varphi)^- & @_i \neg\varphi\\
			& (@_i \neg\langle	\alpha\rangle\langle					\alpha^*\rangle\varphi)^- \end{array}}\;(\neg\langle					\alpha^*\rangle^-)
			\quad
			\displaystyle \frac{\redminus{@_i \blueneg{\langle					\alpha^*\rangle\varphi}}}{\begin{array}{c}
			\redminus{@_i \blueneg{\varphi}}\\
			\redminus{@_i \blueneg{[\alpha][\alpha^*]\varphi}} \end{array}}\;(\textcolor{blue}{\neg}\langle\alpha^*\rangle/\langle\alpha^*\rangle^{\textcolor{red}{-}})(\mathrm{x})}\\\\
		\end{array}$}
	\vspace*{-0.5cm}
	\begin{longtable}[l]{rl}
	(ix) & either both $\neg$ and $^-$ are present, or neither is present;\\
	(x) & one of $\neg$ or $^-$ is present, but not both.
	\end{longtable}}
	\vspace*{-0.3cm}
	\caption{Tableau rules. (2/2)}
	\label{fig:rules_composite_progs}
\end{figure}

\begin{defn}[Fischer-Ladner closure]\label{def:fladner}
	Let $\varphi\in\Fm^\bullet(\lL_{dyn})$. The \emph{Fischer-Ladner closure} of $\varphi$, denoted $\cl(\varphi)$, is the smallest set of formulas with the following properties:
	\begin{enumerate}
		\item $\varphi\in \cl(\varphi)$ if $\varphi\in\Fm(\lL_{dyn})$;
		\item if $\varphi=\psi^-$ then $\psi\in\cl(\varphi)$;
		\item if $\psi \in \cl(\varphi)$ and $\psi\neq\neg\delta$, then $\neg \psi\in \cl(\varphi)$;
		\item if $\neg\psi, \psi\wedge\delta, \psi\vee\delta, \psi\rightarrow\delta, \langle \alpha\rangle \psi, [\alpha]\psi, @_i\psi \in \cl(\varphi)$ then $\psi,\delta \in \cl(\varphi)$;
		\item if $\langle\alpha;\beta\rangle\psi \in \cl(\varphi)$, respectively
		$[\alpha;\beta]\psi \in \cl(\varphi)$, then $\langle\alpha\rangle\langle\beta\rangle\psi \in \cl(\varphi)$, respectively $[\alpha][\beta]\psi \in \cl(\varphi)$;
		\item if $\langle\alpha\cup\beta\rangle\psi \in \cl(\varphi)$, respectively
		$[\alpha\cup\beta]\psi \in \cl(\varphi)$, then $\langle\alpha\rangle\psi,\langle\beta\rangle\psi \in \cl(\varphi)$, respectively $[\alpha]\psi,[\beta]\psi \in \cl(\varphi)$;
		\item if $\langle\alpha^*\rangle\psi \in \cl(\varphi)$, respectively
		$[\alpha^*]\psi \in \cl(\varphi)$, then $\langle\alpha\rangle\langle\alpha^*\rangle\psi \in \cl(\varphi)$, respectively $[\alpha][\alpha^*]\psi \in \cl(\varphi)$;
		\item if $\langle\delta?\rangle\psi \in \cl(\varphi)$, respectively
		$[\delta?]\psi \in \cl(\varphi)$, then $\varphi\wedge\psi \in \cl(\varphi)$, respectively $\varphi\rightarrow\psi \in \cl(\varphi)$.
		
	\end{enumerate}
\end{defn}

\begin{lem}[Closure property]\label{lem:closure_prop}
	If $@_i\psi\in \tau$, where $\psi\neq j, \langle a\rangle j, \neg[ a]\neg j$ for $ a\in\Act$, $j\in\Nom$, or if $(@_i\psi)^-\in\tau$, then either $\psi\in\cl(\varphi)$ or, in case $\psi=\neg\delta$,  $\delta\in\cl(\varphi)$, where $\varphi$ is a root formula.
\end{lem}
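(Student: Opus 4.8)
The plan is to prove the statement by induction on the construction of the tableau $\tau$, carrying along the invariant $(\star)$: every regular formula $@_i\psi\in\tau$ with $\psi\notin\{j,\langle a\rangle j,\neg[a]\neg j\}$, and every minus-formula $(@_i\psi)^-\in\tau$, satisfies $\psi\in\cl(\varphi)$, or $\delta\in\cl(\varphi)$ in case $\psi=\neg\delta$, where $\varphi$ ranges over the root formulas. First I would dispatch the base case: every root formula lies in $\Fm^\bullet(\lL_{dyn})$, so a regular root $@_i\psi$ belongs to $\cl$ by clause~1 of Definition~\ref{def:fladner}, whence $\psi\in\cl$ --- and $\delta\in\cl$ if $\psi=\neg\delta$ --- by clause~4; a minus root $(@_i\psi)^-$ gives $\psi\in\cl$ by clause~2; a bare root formula only becomes relevant after $(@_\mathrm{I})$ or $(@^-_\mathrm{I})$ fires, which is part of the inductive step. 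Hence $(\star)$ holds at the root.

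For the inductive step I would argue rule by rule, matching the conclusion of each rule to the clause of Definition~\ref{def:fladner} that places it in $\cl$ given that the premise is there. The Boolean rules $(\wedge),(\vee),(\rightarrow),(\neg\wedge),(\neg\vee),(\neg\rightarrow)$, together with $(@_\mathrm{E}),(\neg@),(\neg\neg)$ and all their minus-variants, output only (generalized) immediate subformulas of their premises, handled by clauses~3 and~4; $(\mathrm{Id}),(\mathrm{Nom}),(\mathrm{Id}^-)$ either output one of the three exempted shapes (such as $@_i i$) or pass along the inner formula of a literal, with the layer of $\neg$ that $(\mathrm{Id}^-)$ can introduce absorbed either by clause~3 or by the ``$\psi=\neg\delta$'' alternative of $(\star)$; the atomic-modal rules $([a]),(\langle a\rangle),(\neg[a]),(\neg\langle a\rangle)$ and their minus-versions output, on one side, a relational atom $@_i\langle a\rangle t$ or $@_i\neg[a]\neg t$ for which $(\star)$ is vacuous, and on the other side the immediate subformula reached via clause~4 --- and the relational side-premises $@_i\langle a\rangle j$, $@_i\neg[a]\neg j$ that $([a])$ and $(\neg\langle a\rangle)$ consume are themselves exempted, so they impose no condition on $\cl$; the composite-program rules of Figure~\ref{fig:rules_composite_progs}, each condensed rule first unpacked into the four (respectively two) instances it abbreviates, are exactly the syntactic transitions sanctioned by clauses~5--8, namely $[\alpha;\beta]\varphi$ to $[\alpha][\beta]\varphi$, $[\alpha\cup\beta]\varphi$ to $[\alpha]\varphi\wedge[\beta]\varphi$, $[\psi?]\varphi$ to $\psi\rightarrow\varphi$, together with the diamond analogues; and the iteration rules output either the immediate subformula $\varphi$ via clause~4 or the unfolding $[\alpha][\alpha^*]\varphi$, respectively $\langle\alpha\rangle\langle\alpha^*\rangle\varphi$, via clause~7, possibly under a $\neg$ (clause~3) or carrying a $^-$-tag. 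In each case the conclusion satisfies $(\star)$, so the invariant survives the step.

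The part I expect to be most delicate is not any individual rule but the bookkeeping between the paraconsistent negation $\neg$ and the extra-logical tag $^-$: I would need to verify that clause~3 of Definition~\ref{def:fladner} adds $\neg\psi$ to $\cl$ precisely in the configurations where the lemma wants to conclude ``$\psi\in\cl$'' rather than ``$\delta\in\cl$'' (the critical case being $(\mathrm{Id}^-)$ producing $@_i\neg\neg j$ from $(@_i\neg j)^-$), that $(\neg\neg)$ and $(\neg\neg^-)$ strip double negations in step with the fact that $\cl$ need not contain $\neg(\neg\delta)$, and --- most importantly --- that every existential rule, $(\langle a\rangle),(\neg[a]),(\neg\langle a\rangle^-),([a]^-)$, injects its fresh nominal $t$ only inside a formula of one of the three exempted shapes, so that no formula outside $\cl$ ever enters $\tau$ and breaks the induction hypothesis. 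Once this is confirmed uniformly across the rule set, $(\star)$ propagates from the root through all of $\tau$, which is the statement of the lemma.
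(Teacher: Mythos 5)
Your proposal is correct and takes essentially the same approach as the paper: the paper's entire proof is the one-line remark that the result ``can be obtained by checking each rule,'' and your rule-by-rule induction on the tableau construction, with the invariant tracked through the clauses of Definition~\ref{def:fladner}, is exactly that check carried out in detail. The delicate points you single out --- the $(\mathrm{Id}^-)$ case resolved via clause~3 and the ``$\psi=\neg\delta$'' alternative, and the fact that the existential rules only introduce fresh nominals inside the exempted relational shapes --- are precisely the places where the check needs care, and the mechanisms you name handle them.
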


\begin{proof}
	The proof can be obtained by checking each rule.\quad
\end{proof}

The following is a consequence of this result:

\begin{cor}
	For any tableau $\tau$ and nominal $i$, the following sets are finite:
	
	$$\begin{array}{lcl}
		\mathrm{\Gamma}_i & = & \{\varphi\ |\ @_i \varphi \in \tau, \text{ where } \varphi\neq \langle a\rangle j, \neg [a] \neg j, \text{ for }j\in\Nom,a \in \Act \}\\
		\mathrm{\Gamma}_i^- & = & \{\varphi\ |\ (@_i\varphi)^- \in \tau\}
	\end{array}$$
\end{cor}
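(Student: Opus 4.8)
The plan is to bound both $\Gamma_i$ and $\Gamma_i^-$ using the closure property (Lemma~\ref{lem:closure_prop}) together with the finiteness of the Fischer--Ladner closure of each root formula. Let $R$ be the set of root formulas of $\tau$, which is finite by definition of a tableau, and put $\cl(R)=\bigcup_{\rho\in R}\cl(\rho)$. Since $R$ mentions only finitely many nominals and finitely many atomic actions, it will suffice to show (a) that $\cl(\rho)$ is finite for every $\rho\in\Fm^\bullet(\lL_{dyn})$, so that $\cl(R)$ is finite, and (b) that only finitely many nominals $j$ satisfy $@_i j\in\tau$.

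For (a) I would not argue by plain subformula induction, since clause~7 of Definition~\ref{def:fladner} unfolds $[\alpha^*]\psi$ into $[\alpha][\alpha^*]\psi$, which again contains $[\alpha^*]\psi$ as a subformula. The standard Fischer--Ladner remedy applies: induct on a complexity measure on formulas in which an occurrence of $^*$ strictly dominates the $[\alpha][\alpha^*]$ pattern it generates (for instance a weighted count of modal operators and program constructors, tuned so that $[\alpha^*]\psi$ outweighs $[\alpha][\alpha^*]\psi$). Each of the eight clauses then adds only boundedly many formulas of strictly smaller measure, so $\cl(\rho)$ is finite.

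Given (a), Lemma~\ref{lem:closure_prop} does most of the work. For every $\psi$ with $(@_i\psi)^-\in\tau$ we get $\psi\in\cl(R)$ or $\psi=\neg\delta$ with $\delta\in\cl(R)$, hence $\Gamma_i^-\subseteq\cl(R)\cup\{\neg\delta\mid\delta\in\cl(R)\}$, which is finite; the same inclusion covers the part of $\Gamma_i$ made of formulas that are neither a bare nominal nor of the excluded shapes $\langle a\rangle j$, $\neg[a]\neg j$. For (b) I would inspect which rules can place a formula $@_i j$ with $j\in\Nom$ into a branch: $(\mathrm{Id})$ contributes only $@_i i$; $(@_\mathrm{I})$ applied to a root formula forces $j$ to occur in $R$; the premises of $(\neg\neg)$, $(@_\mathrm{E})$ and similar rules, via Lemma~\ref{lem:closure_prop}, force $j\in\cl(R)$, hence $j$ occurs in $R$; and $(\mathrm{Nom})$ only propagates an already-present bare nominal. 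Crucially, every fresh nominal $t$ introduced by an existential rule appears only inside a formula $@_t\chi$ or $(@_t\chi)^-$ with $\chi$ governed by $\cl(R)$, or inside the excluded forms $\langle a\rangle t$, $\neg[a]\neg t$ — never as a bare $@_i t$. An induction on the construction of $\tau$ then yields that the nominals $j$ with $@_i j\in\tau$ are exactly $i$ itself plus nominals occurring in $R$, a finite set. Together with the previous paragraph this gives finiteness of $\Gamma_i$ and $\Gamma_i^-$.

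The main obstacle I expect is the last step, (b): one must make the induction on tableau construction genuinely airtight, in particular verifying that $(\mathrm{Nom})$ (and the nominal-merging it encodes) cannot cascade a fresh nominal into the bare position of some bucket and so inflate $\Gamma_i$; handling $(@_\mathrm{I})$, whose premise need not be $@$-prefixed, also requires noting that the only non-$@$-prefixed formulas ever in a branch are root formulas. By contrast, the Fischer--Ladner finiteness in (a) is routine once the right complexity measure is fixed, and the reduction of the non-nominal cases to $\cl(R)$ is immediate from Lemma~\ref{lem:closure_prop}.
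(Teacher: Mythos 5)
Your treatment of $\Gamma_i^-$ and of the non-nominal part of $\Gamma_i$ is exactly what the paper intends: it states the corollary as an immediate consequence of Lemma~\ref{lem:closure_prop} together with the (standard) finiteness of the Fischer--Ladner closure of the finitely many root formulas, so your part (a) and the reduction to $\cl(R)\cup\{\neg\delta\mid\delta\in\cl(R)\}$ coincide with the paper's one-line argument.

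The genuine gap is in your step (b), at exactly the point you flagged. The claim that a fresh nominal can never reach bare body position $@_m t$ is false in this system. Take the root set $\{@_k[a]m,\ @_k\langle a\rangle p\}$: applying $(\langle a\rangle)$ to $@_k\langle a\rangle p$ introduces a fresh $t$ with $@_k\langle a\rangle t$ and $@_t p$; applying $([a])$ to $@_k[a]m$ and $@_k\langle a\rangle t$ gives $@_t m$; $(\mathrm{Id})$ gives $@_t t$; and $(\mathrm{Nom})$ applied to $@_t m$ and the literal $@_t t$ yields $@_m t$. So $\Gamma_m$ does acquire fresh nominals, your characterisation ``exactly $i$ plus the nominals of $R$'' is wrong, and the induction on tableau construction in (b) cannot close as described. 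Note that this case really does need an argument: unlike Lemma~\ref{lem:closure_prop}, the corollary's $\Gamma_i$ does not exclude bodies that are bare nominals, so the closure property alone does not cover it (the paper silently glosses over this). If one reads ``tableau'' as any finite stage $\tau^n$ of Definition~\ref{def:tableau_constr}, finiteness is trivial; but for the uniform bound that the termination argument actually needs, equalities $@_i j$ have to be bounded by tracking how fresh nominals can be identified with nominals already on the branch (each $@_m t$ presupposes $@_t m$), which is bookkeeping of the kind done in the termination proof itself rather than something that falls out of inspecting which rules can output a bare nominal.
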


We define a binary relation between nominals naming the same states and a second one to establish the precedence of nominals as follows:

\begin{defn}\label{def:same_nominal}
	Let $\Theta$ be a branch of a tableau and let $\Nom^\Theta$ be the set of nominals occurring in the formulas of $\Theta$. Define a binary relation $\same$ on $\Nom^\Theta$ by $i \same j$ if and only if the formula $@_i j\in\Theta$.
\end{defn}

\begin{defn}\label{def:precedence}
	Let $i$ and $j$ be nominals occurring on a branch $\Theta$ of a tableau $\tau$. The nominal $i$ is \emph{included} in the nominal $j$ with respect to $\Theta$, denoted $i \precededby j$, if, for any formula $\psi\in\cl(\varphi)$, where $\varphi$ is a root formula, the following holds: 
	\begin{itemize}\setlength\itemsep{0cm}
		\renewcommand\labelitemi{--}
		\item if $\redminus{@_i {\color{blue} \neg}\psi}\in\Theta$, then $\redminus{@_j {\color{blue} \neg}\psi}\in\Theta$, with the red and/or blue parts occurring accordingly in the premise and conclusion; and
		\item the first occurrence of $j$ in $\Theta$ is before the first occurrence of $i$.
	\end{itemize}
\end{defn}

A tableau is built according to the following algorithm:

\begin{defn}[Tableau construction]\label{def:tableau_constr}
	Let $\Delta$ be a finite set of signed formulas in $\Fm^\bullet(\lL_{dyn})$. A \emph{tableau for} $\Delta$ is built inductively according to the following rules:
	
	\begin{itemize}
		\setlength\itemsep{0cm}
		\renewcommand\labelitemi{--}
		\item The one branch tableau $\tau^0$ composed of the formulas in $\Delta$ is a tableau for $\Delta$;
		\item The tableau $\tau^{n+1}$ is obtained from the tableau $\tau^{n}$ if it is possible to apply an arbitrary rule to $\tau^n$ which obeys the following three restrictions:
		\begin{enumerate}
			\setlength\itemsep{0cm}
			\item[$\mathrm{(1)}$] If a formula that result from the application of a rule already occurs in the branch, then its addition is simply omitted;
			\item[$\mathrm{(2)}$] A destructive rule is only applied once to the same formula in each branch;
			\item[$\mathrm{(3)}$] An existential rule is not applied to $@_i \varphi/(@_i\varphi)^-$ (for an appropriate $\varphi$)
			on a branch $\Theta$ if there exists a nominal $j$ such that $i\precededby j$.
		\end{enumerate}
	\end{itemize}
\end{defn}

Therefore a formula cannot occur more than once on a branch, a destructive rule cannot be applied more than once to the same formula in a branch and the introduction of new nominals is controlled.

Termination of the tableau construction algorithm is ensured using the same approach as in \cite{jlamppaper}. A brief explanation follows: start by observing that new nominals are introduced in the tableau by the application of one of the existential rules or the $(@_\mathrm{I}^-)$ rule. The latter, classified as a destructive rule, is applied only once to a minus-formula $\varphi^-$ as long as $\varphi$ is not a satisfaction statement; given that the set $\Delta$ of root formulas is finite, so is the number of times that the rule is applied overall and therefore the number of new nominals generated by this particular rule.

We may classify the elements of $\Nom^\Theta$ according to their origin: a nominal is self-generated if it appears in a root formula or it appears as the product of applying the $(@_\mathrm{I}^-)$ rule; a nominal $j$ is generated by $i$ in $\Theta$ if it appears for the first time after the application of one of the existential rules to a satisfaction formula of the form $@_i \varphi$ or a minus version in the branch (for appropriate $\varphi$). If we denote the top ``source'' of nominals by $\star$ we may consider an analogous ordering $<_\Theta$ on nominals given their origin, as the one from Definition 12 in \cite{jlamppaper}:

\begin{itemize}\setlength\itemsep{0cm}
	\renewcommand\labelitemi{--}
	\item $i <_\Theta j$, with $i,j\in\Nom^\Theta$, if and only if $j$ is generated by $i$ in $\Theta$;
	\item $\star <_\Theta j$, with $j\in\Nom^\Theta$, if and only if $@_j$ appears in a root formula or the nominal $j$ is self-generated;
	\item $x\not<_\Theta \star$, for all $x\in\Nom^\Theta\cup\{\star\}$.
\end{itemize}

Then the graph $(\Nom^\Theta\cup \{\star\},<_\Theta)$ can be showed to be a well-founded (without infinite descending chains), finitely branching tree. From that result, together with the fact that the set $\Delta$ is finite and that for each formula its Fisher-Ladner closure set is finite, termination follows. Details may, once again, be checked on the proof of Theorem 2 in \cite{jlamppaper}, taking into account that the Subformula Property mentioned there must be replaced with the Closure Property (Lemma~\ref{lem:closure_prop} in this paper).

Branches of a tableau $\tau$ can be classified into three categories: open, closed or ignorable; a tableau can be either open or closed. The details follow:

\begin{defn}\label{def:branches}
	A branch $\Theta$ is \emph{closed} if for some nominal $i$, both $@_i \varphi$ and $(@_i\varphi)^-$ appear in $\Theta$, or if either $@_i \neg i$, $@_i \bot$ or $(@_i \neg \bot)^-$ appears in $\Theta$;
	
	A branch $\Theta$ is \emph{ignorable} if it is terminal (\emph{i.e.}, no more rules can be applied to formulas in $\Theta$ according to Definition~\ref{def:tableau_constr}) and it fits one of the four following categories:
	
	\begin{itemize}
		\item ignorable branch of type $\langle\alpha^*\rangle\varphi$: there exists a nominal $i$ such that\linebreak $@_i \langle \alpha^*\rangle \varphi \in \Theta$ and for every nominal $j$, $@_ j \langle \alpha^*\rangle \varphi \in \Theta$ implies $(@_j \varphi)^-\in\Theta$;
		
		\item ignorable branch of type $\neg\langle\alpha^*\rangle\varphi^-$: there exists a nominal $i$ such that\linebreak $(@_i \neg\langle \alpha^*\rangle \varphi)^- \in \Theta$ and for every nominal $j$, $(@_j \neg\langle \alpha^*\rangle \varphi)^- \in \Theta$ implies $@_j \neg \varphi\in\Theta$;
		
		\item ignorable branch of type $[\alpha^*]\varphi^-$: there exists a nominal $i$ such that\linebreak $(@_i [\alpha^*] \varphi)^- \in \Theta$ and for every nominal $j$, $(@_j [\alpha^*] \varphi)^- \in \Theta$ implies $@_j \varphi \in\Theta$; or
		
		\item ignorable branch of type $\neg[\alpha^*]\varphi$: there exists a nominal $i$ such that\linebreak $@_i \neg [\alpha^*] \varphi \in \Theta$ and for every nominal $j$, $@_j \neg [\alpha^*] \varphi \in \Theta$ implies \mbox{$(@_j \neg \varphi)^-\in\Theta$}.
	\end{itemize}

	A branch $\Theta$ is \emph{open} if it is terminal and neither closed nor ignorable.
		
	A tableau is open if it has at least one open branch, otherwise it is closed.
\end{defn}

We characterize a nominal as being terminal if it fits the following definition:

\begin{defn}\label{def:terminal_nominal}
	A nominal $t$ is called \emph{terminal} for $\varphi$ in the branch $\Theta$ if $@_t \varphi\in \Theta$ and there exists a nominal $s$ such that $t\subseteq_\Theta s$.
\end{defn}

The following result is immediate:

\begin{prop}\label{prop:existence_term_nominal}
	Let $\Theta$ be an ignorable branch of type $\mathrm{X}$ (of appropriate form). Then there is a terminal nominal $t$ for $\mathrm{X}$ in $\Theta$.
\end{prop}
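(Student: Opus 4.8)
The plan is to unfold the definitions and trace where the ignorability condition, combined with the termination of the tableau construction, forces the precedence relation $\precededby$ to relate two nominals. Fix an ignorable branch $\Theta$ of type $\mathrm{X}$, say $\mathrm{X}=\langle\alpha^*\rangle\varphi$ (the other three cases being symmetric). By Definition~\ref{def:branches} there is a nominal $i$ with $@_i\langle\alpha^*\rangle\varphi\in\Theta$ and, for every nominal $j$ with $@_j\langle\alpha^*\rangle\varphi\in\Theta$, we have $(@_j\varphi)^-\in\Theta$. We want to produce a terminal nominal for $\mathrm{X}$, i.e.\ some $t$ with $@_t\varphi\in\Theta$ — wait, rather $@_t\mathrm{X}\in\Theta$ — and an $s$ with $t\precededby s$. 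So the target is: exhibit two distinct nominals $t,s$ both carrying the formula $\langle\alpha^*\rangle\varphi$ such that $s$ appears first and $t$ is ``included'' in $s$ in the sense of Definition~\ref{def:precedence}.

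First I would argue that infinitely many applications of the iteration rule $(\langle\alpha^*\rangle)$ would be needed to avoid any inclusion, contradicting termination. Concretely: starting from $@_i\langle\alpha^*\rangle\varphi$, the rule $(\langle\alpha^*\rangle)$ on the branch that keeps $\langle\alpha^*\rangle$ alive produces $@_i\langle\alpha\rangle\langle\alpha^*\rangle\varphi$; applying the existential rule $(\langle a\rangle)$-style machinery (via the composite-program rules reducing $\langle\alpha\rangle\langle\alpha^*\rangle$ down to atomic actions) introduces a fresh nominal carrying $\langle\alpha^*\rangle\varphi$ again. Since the branch $\Theta$ is terminal, this cannot go on forever without restriction~(3) of Definition~\ref{def:tableau_constr} kicking in: the existential rule must have been blocked at some point, meaning there is a nominal $t$ (the one to which we were about to apply an existential rule) and a nominal $s$ with $t\precededby s$. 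The key is to check that this $t$ satisfies $@_t\langle\alpha^*\rangle\varphi\in\Theta$, so that it is genuinely terminal for $\mathrm{X}$.

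The cleanest way to package this is to use the ordering $<_\Theta$ and the well-foundedness of the tree $(\Nom^\Theta\cup\{\star\},<_\Theta)$ established earlier. Consider the set $S$ of nominals $j$ with $@_j\langle\alpha^*\rangle\varphi\in\Theta$; by hypothesis $S$ is nonempty ($i\in S$). Because the generation tree is finitely branching and well-founded, and because each $j\in S$ from which the iteration unfolding proceeded would generate a new member of $S$ unless blocked, some $j\in S$ must have had the relevant existential rule blocked by restriction~(3); that witnesses $j\precededby s$ for some $s$, and then $t:=j$ is terminal for $\mathrm{X}$ in $\Theta$ by Definition~\ref{def:terminal_nominal}. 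For the other three ignorable types one repeats the argument with the corresponding rule — $(\neg\langle\alpha^*\rangle^-)$, $([\alpha^*]^-)$, $(\neg[\alpha^*])$ — and the corresponding ``surviving'' formula ($\neg\langle\alpha^*\rangle\varphi$ in minus form, etc.); the structure is identical because those rules have the same branching shape, with one disjunct keeping the starred formula alive and the existential/destructive rules eventually blocked.

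The main obstacle I expect is bookkeeping rather than conceptual: one has to be careful that the unfolding of $\langle\alpha\rangle\langle\alpha^*\rangle\varphi$ through the composite-program rules of Figure~\ref{fig:rules_composite_progs} really does, for a general program $\alpha$, lead to an application of an existential rule producing a new nominal that again carries $\langle\alpha^*\rangle\varphi$ (and lies in the Fischer–Ladner closure, by Lemma~\ref{lem:closure_prop}), so that the members of $S$ are linked by $<_\Theta$. One must also verify that the inclusion $t\precededby s$ delivered by the blocked existential-rule condition is exactly the inclusion required by Definition~\ref{def:terminal_nominal} — but this is immediate since both refer to the same relation $\precededby$. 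Given these checks, the proposition follows directly; indeed, as the paper says, it is ``immediate'' once termination and the blocking mechanism are in hand.
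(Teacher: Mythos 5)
The paper supplies no argument here (the result is announced as ``immediate''), and your reconstruction is surely the intended one: ignorability forces the right-hand disjunct of the $(\langle\alpha^*\rangle)$-type rules to be taken at every nominal carrying the starred formula, so in a terminal branch the unfolding can only have stopped because restriction~$(3)$ of Definition~\ref{def:tableau_constr} blocked an existential rule, and that blocking is exactly what produces a pair $t\precededby s$ with $@_t\mathrm{X}\in\Theta$, i.e.\ a terminal nominal in the sense of Definition~\ref{def:terminal_nominal}. In spirit your route and the paper's coincide.

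However, the step you set aside as ``bookkeeping'' is in fact the crux, and as it stands it does not go through for all programs. The unfolding of $@_j\langle\alpha\rangle\langle\alpha^*\rangle\varphi$ need not pass through any existential rule: if $\alpha$ is a test $\psi?$ (or a sequence/choice whose relevant disjunct reduces to a test), the rules $(\langle\psi?\rangle)$ and $(\wedge)$ reproduce $@_j\langle\alpha^*\rangle\varphi$ at the \emph{same} nominal $j$, where restriction~$(1)$ simply omits it; no new nominal is created, restriction~$(3)$ never fires, and your argument yields no pair $t\precededby s$. Concretely, the tableau with root $@_i\langle (p?)^*\rangle q,\ (@_i q)^-$ has a terminal branch containing only the nominal $i$ that is ignorable of type $\langle (p?)^*\rangle q$, yet no nominal $s$ with $i\precededby s$ can exist, since Definition~\ref{def:precedence} requires $s$ to occur strictly before $i$. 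So your proof (and, read literally, the proposition itself) needs an additional argument or restriction covering the cases where the starred formula is propagated without ever invoking an existential rule; until that is supplied, the appeal to the blocking mechanism establishes the claim only when some existential unfolding of $\langle\alpha\rangle\langle\alpha^*\rangle\varphi$ is actually forced on the branch.
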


A branch $\Theta$ is globally satisfiable in a model $\mM=(\Ww, \famrposact, \famrnegact,\linebreak \nval, \vpos, \vneg)$ if for all $\varphi$ in $\Theta$, $\mM\models \varphi$, and for all $\varphi^-$ in $\Theta$, $\mM\not\models \varphi$. A tableau is globally satisfiable if there is a model where one of its branches is so.

Global satisfiability of a tableau is preserved under the application of tableau rules:

\begin{thm}\label{thm:preserve_sat}
	If $\tau$ is a globally satisfiable tableau, then $\tau'$, obtained by applying one of the tableau rules to $\tau$, is globally satisfiable.
\end{thm}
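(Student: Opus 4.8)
The plan is to argue by a case analysis on which rule is applied. First I would fix a model $\mM$ in which some branch $\Theta$ of $\tau$ is globally satisfiable, so that $\mM\models\varphi$ for every $\varphi\in\Theta$ and $\mM\not\models\varphi$ for every $\varphi^-\in\Theta$. The tableau $\tau'$ arises by applying a rule to the formulas of some branch, extending it by one node (by two, if the rule branches). If that branch is not $\Theta$, then $\Theta$ is still a branch of $\tau'$ and nothing is to prove; so the only interesting situation is when the rule acts on $\Theta$, and it then suffices to exhibit a model — either $\mM$ itself, or a minor variant of it — that globally satisfies the resulting branch (one of the two resulting branches). Throughout I would use the two bookkeeping facts already recorded in the paper: $\mM\Vdash\varphi^-$ iff $\exists w\,(\mM,w\not\Vdash\varphi)$, and the truth value of a satisfaction statement $@_i\psi$ is the same at every state, so $\mM\not\models@_i\psi$ is equivalent to $\mM,\nval(i)\not\models\psi$.

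For the rules handling the Boolean connectives, the satisfaction operator and nominals — $(@_\mathrm{I})$, $(@_\mathrm{E})$, $(\wedge)$, $(\vee)$, $(\rightarrow)$, $(\neg\neg)$, $(\mathrm{Nom})$, $(\mathrm{Id})$, their $\neg$-versions, and all the corresponding minus-rules — the conclusion is read off directly from the clauses of Figure~\ref{fig:satisfaction}. For instance, from $\mM\models@_i(\varphi\wedge\psi)$ one gets $\mM,\nval(i)\models\varphi$ and $\mM,\nval(i)\models\psi$, hence $\mM\models@_i\varphi$ and $\mM\models@_i\psi$; for a branching rule like $(\vee)$ or $(\rightarrow)$ one performs a sub-case split and keeps the branch whose added formula holds in $\mM$. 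The rule $(@_\mathrm{I})$ is the one place where it is essential that $\Theta$ is \emph{globally} satisfiable: from $\mM\models\varphi$ (true at every state) we obtain $\mM,\nval(i)\models\varphi$ for the $i$ already on the branch, hence $\mM\models@_i\varphi$. The atomic-modal rules $([a])$, $(\neg\langle a\rangle)$, $(\langle a\rangle^-)$, $(\neg[a]^-)$ are treated the same way via clauses (viii)–(ix), using that $@_i\langle a\rangle j$ forces $\nval(i)\,\rpos_a\,\nval(j)$ and $@_i\neg[a]\neg j$ forces $\nval(i)\,(\rneg_a)^c\,\nval(j)$.

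The genuinely model-changing cases are the existential rules $(\langle a\rangle)$, $(\neg[a])$, $([a]^-)$, $(\neg\langle a\rangle^-)$ and the destructive rule $(@_\mathrm{I}^-)$, each of which introduces a nominal $t$ fresh to $\Theta$. Here I would define $\mM'$ to agree with $\mM$ except that $\nval$ is redefined at $t$ to a suitable witness world: for $(\langle a\rangle)$, a $w'$ with $\nval(i)\,\rpos_a\,w'$ and $\mM,w'\models\varphi$ (it exists since $\mM\models@_i\langle a\rangle\varphi$); for $(@_\mathrm{I}^-)$, a $w'$ with $\mM,w'\not\models\varphi$ (it exists since $\mM\not\models\varphi$); and analogously for the other three rules using clauses (viii)–(ix). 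Because $t$ occurs in no formula of $\Theta$ — in particular in no nominal sitting inside a test — changing $\nval(t)$ does not disturb the truth value of any formula already present, so $\mM'$ still globally satisfies/falsifies all of $\Theta$, while by construction it satisfies the two new formulas. I expect this freshness step to be the only real subtlety of the proof, and I would isolate the routine lemma behind it, namely that the truth value of a formula not containing $t$ is insensitive to the value of $\nval$ at $t$.

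Finally, for the composite-program rules of Figure~\ref{fig:rules_composite_progs} the conclusion is obtained from the premise by replacing a program-or-formula with a $\fequiv$-equivalent one according to Lemma~\ref{lem:dynamic_axioms}; since $\varphi\fequiv\psi$ gives both $\varphi\equiv\psi$ and $\neg\varphi\equiv\neg\psi$, this handles the $@_i\chi$, $@_i\neg\chi$ and minus forms uniformly, so global satisfiability (resp. falsifiability) transfers verbatim. The iteration rules are the same, now using the unfoldings $[\alpha^*]\varphi\fequiv\varphi\wedge[\alpha][\alpha^*]\varphi$ and $\langle\alpha^*\rangle\varphi\fequiv\varphi\vee\langle\alpha\rangle\langle\alpha^*\rangle\varphi$ from Lemma~\ref{lem:dynamic_axioms}: for the non-branching $([\alpha^*]/\neg[\alpha^*]^-)$ rule both conjuncts of the unfolding are forced, whereas for the branching iteration rules one splits on which disjunct of the unfolding holds in $\mM$ and retains the corresponding branch. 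Assembling all cases yields the theorem.
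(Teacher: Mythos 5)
Your proposal is correct and takes essentially the same route as the paper's proof: a rule-by-rule verification, reading the rules of Figure~\ref{fig:rules} directly off the semantics of Definition~\ref{defn:satisfaction} (your explicit witness/fresh-nominal adjustment of $\nval$ for the existential rules and $(@_\mathrm{I}^-)$ is the careful version of what the paper subsumes under ``immediate''), and handling the composite-program rules of Figure~\ref{fig:rules_composite_progs} via the equivalences of Lemma~\ref{lem:dynamic_axioms}. The only phrasing to tighten is the branching iteration rules, the one subtlety the paper singles out: the right branch also contains the minus dual of the left-branch formula, so the split should be on whether $@_i\varphi$ is satisfied in $\mM$ rather than on ``which disjunct of the unfolding holds''---if it is not, your own bookkeeping fact that $\mM\not\models @_i\varphi$ iff $\mM\models(@_i\varphi)^-$, together with the unfolding, yields both formulas of the right branch, which is exactly the paper's argument for $(\langle\alpha^*\rangle)$.
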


\begin{proof}
	The proof can be obtained by checking each rule.
	
	The correctness of rules in Figure~\ref{fig:rules} is immediate as they are closely connected with the semantics provided in Definition~\ref{defn:satisfaction}: for any rule $\displaystyle \frac{\Lambda}{\begin{array}{c|c|c}
			\Sigma_1 & \cdots & \Sigma_n
	\end{array}}$, $n\geq1$,
	and any model $\mM$,
	if $\mM\models\Lambda$ then $\mM\models\Sigma_1$ or $\ldots$ or $\mM\models\Sigma_n$,
	where $\Lambda$, $\Sigma_1,\ldots,\Sigma_n \subset \Fm^\bullet(\lL_{dyn})$. To illustrate the proof, take the case of rule $(\rightarrow^-)$:
	
	Assume that $\mM \models (@_i (\varphi\rightarrow\psi))^-$. Then $\mM \not\models @_i (\varphi\rightarrow\psi)$, \emph{i.e.}, for some $w\in \Ww$, $\mM, w \not\models @_i (\varphi\rightarrow\psi)$, and equivalently, $\mM, \nval(i) \not\models \varphi\rightarrow\psi$. Therefore, by Definition~\ref{defn:satisfaction}, $\mM, \nval(i) \models \varphi$ and $\mM, \nval(i) \not\models \psi$. Thus for any $w$, $\mM, w \models @_i\varphi$ and $\mM, w\not \models @_i \psi$. It follows that $\mM \models @_i \varphi$ and $\mM \not \models @_i\psi$. The latter can be rewritten as $\mM \models (@_i \psi)^-$. This concludes the proof that the application of this particular rule does not affect global satisfiability of the tableau. Other rules follow an analogous approach.
	
	As for the rules in Figure~\ref{fig:rules_composite_progs}, they mimic the equivalences from Lemma~\ref{lem:dynamic_axioms}. The only cases worth mentioning are those concerning the iteration program where, when the branch splits, the minus dual of the formula in the left \mbox{sub-branch} is introduced in the right sub-branch. The introduction of this element plays a major role in the definition of ignorable branches, as the reader may have already noticed. This  does not affect in any way the result under appreciation; nonetheless, the proof for the case of $(\langle\alpha^*\rangle)$ is presented next as an example. The other cases are analogous.
	
	Let $\mM$ be a model such that $\mM \models @_i \langle \alpha^* \rangle \varphi$, equivalently $\mM, \nval(i)\models \langle \alpha^* \rangle \varphi$. From Lemma~\ref{lem:dynamic_axioms}, $\mM, \nval(i)\models \varphi\vee \langle \alpha\rangle\langle \alpha^*\rangle \varphi$, which means that $\mM, \nval(i)\models \varphi$ or\linebreak \mbox{$\mM, \nval(i)\models \langle \alpha\rangle\langle \alpha^*\rangle \varphi$}. Put another way, $\mM, \nval(i)\not\models \varphi$ implies \mbox{$\mM, \nval(i)\models \langle \alpha\rangle\langle \alpha^*\rangle \varphi$}. In addition, it is always the case that either $\mM, \nval(i)\models \varphi$ or \mbox{$\mM, \nval(i) \not\models \varphi$}, the latter equivalent to $\mM, \nval(i)\models \varphi^-$. Combining these scenarios, either $\mM\models @_i\varphi$, or	$\mM\models (@_i\varphi)^-$ and \mbox{$\mM \models @_i\langle \alpha\rangle\langle \alpha^*\rangle \varphi$}.
	Thus global satisfiability of the branch $\Theta$ is preserved by the application of the rule $(\langle\alpha^*\rangle)$.
\end{proof}

The next result allows us to safely discard ignorable branches in terminal and globally satisfiable tableaux:

\begin{thm}
	If $\tau$ is a terminal and globally satisfiable tableau, then one of its globally satisfiable branches is not ignorable.
\end{thm}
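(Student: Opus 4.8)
The plan is to reason by contradiction: assume that \emph{every} globally satisfiable branch of $\tau$ is ignorable, and extract an impossible configuration. Since $\tau$ is terminal it is a finite tree, so it has finitely many branches. First I would fix a model witnessing global satisfiability of $\tau$ and trace a path from the root, at each step descending to a child that remains globally satisfiable — possible at every rule by the reasoning behind Theorem~\ref{thm:preserve_sat}, and at the iteration rules exactly one child is compatible with the value the current model gives to the nominal in the premise (e.g.\ for $(\langle\alpha^*\rangle)$, by Lemma~\ref{lem:dynamic_axioms} a model of $@_i\langle\alpha^*\rangle\varphi$ satisfies $@_i\varphi$ or $@_i\langle\alpha\rangle\langle\alpha^*\rangle\varphi$, so the descent is forced there; likewise for the other three iteration rules). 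Moreover, when an existential rule introduces a fresh nominal I would choose its value on a \emph{shortest} path towards the relevant target. The resulting leaf $\Theta$ is globally satisfiable in a model $\mM$ and, by hypothesis, ignorable — say of type $\langle\alpha^*\rangle\varphi$ (the three remaining types are symmetric, using the complement relations $(\rneg_\alpha)^c$ and the corresponding negative or minus formulas).

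Next I would unpack this. There is a nominal $i$ with $@_i\langle\alpha^*\rangle\varphi\in\Theta$, and the ignorability clause gives $(@_j\varphi)^-\in\Theta$ for every nominal $j$ with $@_j\langle\alpha^*\rangle\varphi\in\Theta$. In particular $\mM\not\models@_i\varphi$; combined with $\mM\models@_i\langle\alpha^*\rangle\varphi$ this yields a shortest $\rpos_\alpha$-path $\nval(i)=u_0\,\rpos_\alpha\,u_1\cdots\rpos_\alpha\,u_d$ with $\mM,u_d\models\varphi$ and $d\geq1$. Since $\Theta$ is terminal, the rule $(\langle\alpha^*\rangle)$ has acted on $@_i\langle\alpha^*\rangle\varphi$; $\Theta$ lies in its right child, for otherwise $@_i\varphi$ and $(@_i\varphi)^-$ both belong to $\Theta$, which is then closed, contradicting global satisfiability. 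Hence $@_i\langle\alpha\rangle\langle\alpha^*\rangle\varphi\in\Theta$; decomposing $\langle\alpha\rangle$ through the deterministic, satisfiability-preserving composite-program rules (which mimic Lemma~\ref{lem:dynamic_axioms}) and applying the existential rule to the resulting atomic diamond produces — unless restriction~(3) of Definition~\ref{def:tableau_constr} blocks it — a nominal $i_1$ with $@_i\langle\alpha\rangle i_1,\ @_{i_1}\langle\alpha^*\rangle\varphi\in\Theta$ whose value we have arranged to be $u_1$; since (again by ignorability) $(@_{i_1}\varphi)^-\in\Theta$, the same analysis applies to $i_1$, and by Proposition~\ref{prop:existence_term_nominal} this ``chase'' $i=i_0,i_1,i_2,\dots$ reaches a terminal nominal after finitely many steps. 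All formulas carried by these nominals lie in the Fischer-Ladner closure of a root formula by Lemma~\ref{lem:closure_prop}.

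Finally I would derive the contradiction. Along the chase each step decreases the distance to $\varphi$ by one (by the shortest-path choice of values), so at the nominal $i_d$ — if the chase reaches that far — we have $\mM,\nval(i_d)=u_d\models\varphi$, and therefore $(\langle\alpha^*\rangle)$ could send us to its \emph{left} child at $i_d$: that branch contains $@_{i_d}\varphi$, so it is not ignorable of type $\langle\alpha^*\rangle\varphi$, and (re-running Theorem~\ref{thm:preserve_sat} from the shared prefix with $\mM$) it is globally satisfiable — either it is outright non-ignorable, contradicting the hypothesis, or it is ignorable only of a strictly ``simpler'' type, since realising $\varphi$ unfolds only proper subformulas of $\langle\alpha^*\rangle\varphi$, and choosing $\Theta$ minimal for the multiset ordering on offending eventualities ranked by subformula complexity (then by shortest-path distance) rules this out. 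The only escape is that the chase is cut short by restriction~(3), at a nominal $i_k$ ($k<d$) with $i_k\precededby s$; but then $s$ inherits $@_s\langle\alpha^*\rangle\varphi$, $(@_s\varphi)^-$ and $@_s\langle\alpha\rangle\langle\alpha^*\rangle\varphi$ and occurs strictly earlier, so $s$ heads its own earlier chase, and re-routing the satisfying model along a shortest path from $\nval(s)$ — which is licit since it only affects the chase nominals of $s$ — strictly lowers the measure while keeping global satisfiability. The main obstacle is exactly this last bookkeeping: showing that resolving one ignorable loop, whether by turning ``left'' into a sibling branch or by re-routing the model, genuinely decreases a well-founded measure and does not spawn an equally bad loop; this is why the induction must be run on a measure of the whole configuration — branch, satisfying model, and offending eventuality — rather than of the branch alone, in the spirit of the termination and completeness arguments of \cite{jlamppaper}.
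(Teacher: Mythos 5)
There is a genuine gap, and you yourself point at it: the case in which the chase is cut off by restriction~(3) of Definition~\ref{def:tableau_constr} is not an exceptional side case but the \emph{generic} one. By Proposition~\ref{prop:existence_term_nominal}, every ignorable branch of type $\langle\alpha^*\rangle\varphi$ has a terminal nominal $t$, i.e.\ a nominal with $@_t\langle\alpha^*\rangle\varphi\in\Theta$, $(@_t\varphi)^-\in\Theta$ and $t\precededby s$ for some $s$; since ignorability forces $(@_{i_k}\varphi)^-\in\Theta$ at every chase nominal, your branch always sits in the right child of the $(\langle\alpha^*\rangle)$ split, and the only way a finite terminal branch can stop unfolding the eventuality is blocking. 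So your ``unless restriction~(3) blocks it'' scenario, which you resolve by an undefined measure (``multiset ordering on offending eventualities'', ``re-routing the satisfying model \ldots strictly lowers the measure''), is precisely where the whole burden of the proof lies, and you concede in the last sentence that this bookkeeping is not done. In addition, the auxiliary claim that the left sibling containing $@_{i_d}\varphi$ can only be ignorable of a ``strictly simpler'' type is unsupported: that sibling shares the entire prefix of $\Theta$ and may carry other iteration eventualities of unrelated complexity, so subformula rank does not obviously decrease.

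For comparison, the paper closes exactly this hole by a different mechanism. It first notes that, since $\tau$ is terminal, the sibling branch $\Theta'$ containing $@_t\langle\alpha^*\rangle\varphi$ and $@_t\varphi$ exists and (after finitely many repetitions over the finitely many ignorable types) may be taken not ignorable at all, hence, under the contradiction hypothesis, \emph{not} globally satisfiable. It then \emph{forces} the blocked existential rule on $@_t\langle\alpha\rangle\langle\alpha^*\rangle\varphi$, producing a fresh nominal $n$ and a split into $(\mathrm{L})$ and $(\mathrm{R})$; if $(\mathrm{L})$ were satisfiable, a surgery on the model (re-pointing $t$ to $\nval(n)$ and adjusting $\rpos_\alpha$) would satisfy $\Theta'$, a contradiction, so $(\mathrm{R})$ must be satisfiable. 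Iterating this forcing shows that the least number of $\alpha$-steps needed in the fixed witnessing model to reach a state satisfying $\varphi$ from $\nval(t)$ exceeds every bound, contradicting its finiteness. Your shortest-path chase is the right intuition for the unblocked portion, but without either this forcing-plus-unsatisfiable-sibling argument or an actually defined and verified well-founded measure for the blocked case, the proof does not go through.
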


\begin{proof}
	Let $\tau$ be a terminal and globally satisfiable tableau whose globally satisfiable branches are all ignorable.
	
	Take $\Theta$ to be a branch in $\tau$ which is globally satisfiable and additionally is also ignorable of type $\langle\alpha^*\rangle\varphi$. Let $t$ be the terminal nominal for $\langle\alpha^*\rangle\varphi$ in $\Theta$.
	
	By definition of ignorable branch, $@_t \langle\alpha^*\rangle\varphi, (@_t \varphi)^-\in\Theta$ and there exists a nominal $s$ such that $t\precededby s$.
	Also, since $\tau$ is a terminal tableau, $@_t \langle\alpha\rangle \langle\alpha^*\rangle\varphi \in \Theta$.
	
	Furthermore, and once again because $\tau$ is terminal, there is a branch $\Theta'$ in $\tau$ such that $@_t \langle\alpha^*\rangle\varphi, @_t \varphi \in \Theta'$. The branch $\Theta'$ is clearly not $\langle\alpha^*\rangle\varphi$-ignorable.
	
	If $\Theta'$ were any other type of ignorable branch, we could apply an analogous strategy to infer that there would be yet another branch which was not that type of ignorable either. Since there is a finite number of possible kinds of ignorable branch (due to Lemma~\ref{lem:closure_prop}), there will come to a point where we have in our hands a branch which is not any type of ignorable. Let us assume that to be already the case. Therefore, the branch $\Theta'$ must not be globally satisfiable (1).
	
	Even though the introduction of a new nominal through the application of the $(\langle\alpha\rangle)$-rule on the formula $@_t \langle\alpha\rangle \langle\alpha^*\rangle\varphi \in \Theta$ has been prevented in accordance with the third restriction in the tableau construction algorithm (Definition~\ref{def:tableau_constr}), we can still force it.
	
	Let $\tau'$ be the terminal tableau obtained from $\tau$ by applying the aforementioned existential rule followed by any other allowed rule according to the tableau construction algorithm.
		
	Then $@_t \langle \alpha\rangle n, @_n \langle\alpha^*\rangle\varphi \in \Theta$, where $n$ is a new nominal.
	The branch $\Theta$ is split into two sub-branches: $(\mathrm{L})$ (for \emph{left}) and $(\mathrm{R})$ (for \emph{right}) such that $@_n \varphi \in (\mathrm{L})$ and $(@_n \varphi)^-, @_n \langle\alpha\rangle \langle\alpha^*\rangle\varphi \in (\mathrm{R})$.
	
	If $(\mathrm{L})$ is a globally satisfiable branch in $\mM$, then $\Theta'$ is globally satisfiable in the model $\mM'$ which is such that $\Ww'=\Ww$, $\nval' \overset{t}{\sim} \nval$, where $\nval'(t)=\nval(n)$,
	$(\rpos_\alpha)'= \rpos_\alpha \cup \{(w, \nval'(t))\ |\ (w, \nval(t)) \in \rpos_\alpha  \}$. However, this contradicts (1). Therefore $(\mathrm{L})$ cannot be globally satisfiable. That implies that $(\mathrm{R})$ must be so, given that it derives from $\Theta$ which is globally satisfiable.
	
	Since $\Theta$ is globally satisfiable, there exists a model $\mM$ such that\linebreak \mbox{$\mM, \nval(t)\models \langle\alpha^*\rangle\varphi$}. This means that there exists $n\in\mathbb{N}_0, w\in\Ww$ such that\linebreak $\mM, w\models \varphi$.
	Since $\mM, \nval(t)\not\models \varphi$, $n>0$. Moreover, since $(\mathrm{R})$ is globally satisfiable it follows that $n>1$.
	
	We can again force the application of the $(\langle\alpha\rangle)$-rule in $(\mathrm{R})$. The left sub-branch that is created cannot be globally satisfiable, otherwise $(\mathrm{L})$ would be globally satisfiable too, which is a contradiction. It follows that $n>2$. We can apply this construction until $n>n$, which is again a contradiction.
	
	If $\Theta$ were any other kind of ignorable branch, analogous steps would be used and the same conclusion would arise.
	
	Therefore not all globally satisfiable branches of $\tau$ can be ignorable. So, if $\tau$ is a terminal and satisfiable tableau then there must be a globally satisfiable branch which is not ignorable.	
\end{proof}

Soundness is stated and proved as follows (observe that if $\Delta$ is empty, then the theorem shows that $\varphi$ is valid):

\begin{thm}[Soundness]\label{thm:soundness}
	If the tableau with root $\Delta, \varphi^-$ is closed, then $\varphi$ is a consequence of $\Delta$.
\end{thm}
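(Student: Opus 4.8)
The plan is to argue by contraposition: I will show that if $\varphi$ is not a consequence of $\Delta$, then every tableau with root $\Delta, \varphi^-$ is open. Suppose $\varphi$ is not a global consequence of $\Delta$. By definition of global consequence there is a model $\mM$ with $\mM \models \Delta$ but $\mM \not\models \varphi$; the latter means exactly $\mM \models \varphi^-$ in the sense of the extra-logical minus operator. Hence the one-branch tableau $\tau^0$ consisting of the root formulas $\Delta, \varphi^-$ is globally satisfiable in $\mM$.

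\textbf{Propagating satisfiability.} Now I run the tableau construction algorithm (Definition~\ref{def:tableau_constr}) to completion, obtaining a terminal tableau $\tau$ with root $\Delta, \varphi^-$. By Theorem~\ref{thm:preserve_sat}, global satisfiability is preserved by each rule application, so $\tau$ is globally satisfiable: one of its branches, say $\Theta$, is globally satisfiable in some model. Since $\tau$ is terminal and globally satisfiable, the preceding theorem (the one about discarding ignorable branches) guarantees that one of its globally satisfiable branches is \emph{not} ignorable; call it $\Theta'$. A globally satisfiable branch cannot be closed either, since no model can satisfy both $@_i\psi$ and $(@_i\psi)^-$, nor any of $@_i\neg i$, $@_i\bot$, $(@_i\neg\bot)^-$ (these are falsified by the semantics of Definition~\ref{defn:satisfaction}). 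Thus $\Theta'$ is terminal, not closed, and not ignorable, hence open by Definition~\ref{def:branches}, and therefore $\tau$ is an open tableau. Since the tableau construction is (up to the order of rule applications) essentially canonical, this shows the closed tableau with root $\Delta, \varphi^-$ cannot exist, and contraposition yields the claim. A remark is in order on the reading ``the tableau is closed'': I take it to mean the systematically constructed terminal tableau is closed, so that showing it must be open suffices.

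\textbf{Main obstacle.} The delicate point is the interaction between the minus operator and global satisfiability when a rule is non-local — in particular the $(@_\mathrm{I})$ and $(@^-_\mathrm{I})$ rules and the iteration rules, where a branch split introduces the minus-dual of one sub-branch's principal formula into the other. Theorem~\ref{thm:preserve_sat} already handles rule-by-rule preservation, so the residual work here is just assembling the three cited results (preservation of satisfiability, elimination of ignorable branches, and impossibility of satisfiable closed branches) in the right order; the genuinely load-bearing lemma is the one ruling out all ignorable branches in a terminal satisfiable tableau, which is invoked but not re-proved here. A secondary subtlety is the base-case translation between ``$\mM \not\models \varphi$'' and ``$\mM \models \varphi^-$'' and, if $\Delta$ is empty, confirming that openness of the tableau with root $\varphi^-$ exhibits a countermodel to $\varphi$, i.e. that $\varphi$ fails to be valid — which is exactly the parenthetical remark in the statement.
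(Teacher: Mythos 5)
Your proposal is correct and follows essentially the same route as the paper: the paper's own proof is just the contrapositive you give, assembling Theorem~\ref{thm:preserve_sat}, the theorem on discarding ignorable branches, and the unsatisfiability of closed branches to conclude that a closed (terminal) tableau forces the root $\Delta,\varphi^-$ to be unsatisfiable, i.e.\ $\varphi$ is a global consequence of $\Delta$. If anything, your version is slightly more explicit than the paper's compressed step ``all branches are closed or ignorable, hence none is globally satisfiable,'' since you correctly route that inference through the ignorable-branch elimination theorem rather than asserting it branchwise.
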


\begin{proof}
	Assume that the tableau $\tau$ with root $\Delta, \varphi^-$ is closed. Then all of its branches are either closed or ignorable; that means that none of its branches is globally satisfiable.
	By definition, there exists no model $\mM$ such that $\mM\models \Delta$ and $\mM\models \varphi^-$; \emph{i.e.}, for every model $\mM$, either $\mM\not\models \Delta$ or $\mM \models \varphi$. Therefore, for every model $\mM$, $\mM\models \Delta$ implies $\mM \models \varphi$.
\end{proof}

In order to prove completeness, we prove that if a terminal tableau has an open branch $\Theta$, then there exists a model $\mM_\Theta$ where all root formulas are valid.

Suppose that $\Theta$ is an open branch of a terminal tableau. We extract a model from $\Theta$ in the same fashion as in \cite{jlamppaper}, and whose details follow:

Let $\U$ be the subset of $\Nom^\Theta$ that contains every nominal $i$ for which there is no nominal $j$ such that $i \precededby j$. Let $\approx$ be the restriction of $\same$ to $\U$; observe that both $\same$ and $\approx$ are equivalence relations. Note also that $\U$ contains all nominals $i$ such that $\star <_\Theta i$.

Given a nominal $i$ in $\U$, we let $[i]_{\approx}$ denote the equivalence class of $i$ with respect to $\approx$ and we let $\U_{\hspace*{-0.5mm}/\hspace*{-0.5mm}\approx}$ denote the set of equivalence classes.

We let $\rpos_a$ be the binary relation on $\U$ defined by $i\rpos_a j$ if and only if there exists a nominal $j'\approx j$ such that one of the following conditions is satisfied:
\begin{enumerate}\setlength\itemsep{0cm}
	\item $@_i\langle a \rangle j'\in \Theta$; or
	\item there exists a nominal $k\in\Nom^\Theta$ such that $@_i\langle a\rangle k\in\Theta$ and $k \precededby j'$.
\end{enumerate}

On the other hand, we define the complement of the binary relation $\rneg_a$ on $\U$ as $i(\rneg_a)^c j$ if and only if there exists a nominal $j'\approx j$ such that one of the following conditions is satisfied:
\begin{enumerate}\setlength\itemsep{0cm}
	\item $@_i\neg[a]\neg j'\in \Theta$; or
	\item there exists a nominal $k\in\Nom^\Theta$ such that $@_i\neg[a]\neg k\in\Theta$ and $k \precededby j'$.
\end{enumerate}

Note that the nominal $k$ referred to in the second items is not an element of $\U$. It follows from $\Theta$ being closed under the rule $(\mathrm{Nom})$ that $\rpos_a$ and $(\rneg_a)^c$ are compatible with $\approx$ in the first argument and it is trivial that they are compatible with $\approx$ in the second argument. We let $\overline{\rpos_a}$, respectively $\left(\overline{\rneg_a}\right)^c$ (observe that once again we define the complement of the relation), be the binary relation on $\U_{\hspace*{-0.5mm}/\hspace*{-0.5mm}\approx}$ defined by $[i]_{\approx}\,\overline{\rpos_a}\,[j]_{\approx}$, respectively $[i]_{\approx}\,\left(\overline{\rneg_a}\right)^c\,[j]_{\approx}$, if and only if $i\rpos_a j$, respectively $i(\rneg_a)^c j$.

Let $\overline{\nval}:\U\rightarrow \U_{\hspace*{-0.5mm}/\hspace*{-0.5mm}\approx}$ be defined such that $\overline{\nval}(i)=[i]_{\approx}$.

Let $\vpos$ be the function that to each ordinary propositional variable assigns the set of elements of $\U$ where that propositional variable occurs, \emph{i.e.}, \mbox{$i\in \vpos(p)$} if and only if $@_i p\in\Theta$.
On the other hand, let $\vneg$ be the function that to each ordinary propositional variable assigns the set of elements of $\U$ where the negation of that propositional variable occurs, \emph{i.e.}, $i\in \vneg(p)$ if and only if $@_i \neg p\in\Theta$. We let $\overline{\vpos}$ be defined by $\overline{\vpos}(p)=\{[i]_{\approx}\ |\ i\in \vpos(p)\}$. We define $\overline{\vneg}$ analogously.

Given a branch $\Theta$, let $\mM_\Theta = \left(U_{\hspace*{-0.5mm}/\hspace*{-0.5mm}\approx},\left(\overline{\rpos_a}\right)_{a\in\Act},\left(\overline{\rneg_a}\right)_{a\in\Act},\overline{\nval}, \overline{\vpos}, \overline{\vneg}\right)$.

We will omit the reference to the branch in $\mM_\Theta$ if it is clear from the context.

\begin{thm}[Model Existence]\label{thm:model_existence}
	Let $\Theta$ be an open branch of a terminal tableau $\tau$. The model extracted from the branch, $\mM$, is such that the following conditions hold:
	
	\begin{enumerate}\setlength\itemsep{0cm}
		\item[$\mathrm{(i)}$] if $@_i\varphi\in\Theta$, then $\mM, [i]_{\approx}\models \varphi$;
		
		\item[$\mathrm{(ii)}$] if $(@_i\varphi)^-\in\Theta$, then $\mM, [i]_{\approx}\not\models \varphi$.
	\end{enumerate}
	
	\noindent whenever $@_i\varphi$ contains only nominals from $\U$.
\end{thm}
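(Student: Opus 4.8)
The plan is to establish (i) and (ii) \emph{simultaneously}, by induction on $\varphi$ with respect to a Fischer--Ladner-style well-founded measure on the finite set $\cl(\varphi_0)$, where $\varphi_0$ denotes a root formula of $\tau$; throughout one restricts, as the statement demands, to formulas $@_i\varphi$ all of whose nominals lie in $\U$. The two standing facts are that $\Theta$ is \emph{terminal}, so every rule of Figures~\ref{fig:rules} and~\ref{fig:rules_composite_progs} applicable under the restrictions of Definition~\ref{def:tableau_constr} has been applied, and that $\Theta$ is \emph{open}, so it is neither closed nor ignorable. First I would record what these facts buy: being not closed gives that $@_i\bot$, $(@_i\neg\bot)^-$ and $@_i\neg i$ never occur on $\Theta$ and that $@_i\psi$ and $(@_i\psi)^-$ never co-occur; and, from the $(\mathrm{Id})$ and $(\mathrm{Nom})$ rules together with the definitions of $\overline{\nval}$, $\overline{\vpos}$, $\overline{\vneg}$, $\overline{\rpos_a}$ and $(\overline{\rneg_a})^c$, one gets the base cases (nominals, propositional variables, $\bot$, and their $\neg$-forms) as well as the two bridge facts $[i]\,\overline{\rpos_a}\,[j]\iff i\,\rpos_a\,j$ and its $(\overline{\rneg_a})^c$ analogue.

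The Boolean and hybrid cases --- $\wedge$, $\vee$, $\rightarrow$, $\neg\neg$, their $\neg$-duals, $@$, $\neg@$, and each of these in minus-form --- are routine: the matching saturation rule reduces the formula to strictly smaller ones (still over $\U$-nominals), the induction hypothesis applies, and the clauses of Figure~\ref{fig:satisfaction} together with the reading of $\varphi^-$ close the case; the $@$/$@^-$ subcases additionally use that $\overline{\nval}$ is well defined on $\U$ and that $@$-elimination ($@_\mathrm{E}$, $@^-_\mathrm{E}$) moves the formula to the named nominal. The atomic-modality cases ($\langle a\rangle$, $[a]$, $\neg\langle a\rangle$, $\neg[a]$ and their minus-forms) are where the quotient-and-precedence bookkeeping enters. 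For $@_i\langle a\rangle\varphi\in\Theta$ with $\varphi\notin\Nom$, the $(\langle a\rangle)$-rule has put $@_i\langle a\rangle t$ and $@_t\varphi$ on $\Theta$ for some $t$; if $t\notin\U$ I would climb the well-founded ordering $\precededby$ to a $t'\in\U$ with $t\precededby t'$, use the definition of $\precededby$ to get $@_{t'}\varphi\in\Theta$ (legitimate since $\varphi\in\cl(\varphi_0)$ by Lemma~\ref{lem:closure_prop}) and the second clause defining $\rpos_a$ to get $i\,\rpos_a\,t'$, hence $[i]\,\overline{\rpos_a}\,[t']$ and, by the induction hypothesis, $\mM,[t']\models\varphi$. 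The universal cases $[a]$, $\neg\langle a\rangle$ run the same machinery in the other direction via the $([a])$- and $(\neg\langle a\rangle)$-rules and both clauses of the definitions of $\rpos_a$, $(\rneg_a)^c$; the minus cases $([a]^-)$, $(\neg\langle a\rangle^-)$ are existential and are handled like $(\langle a\rangle)$. In all the negative-modal subcases one keeps in mind that it is the complement relation $(\overline{\rneg_a})^c$ that is quantified over, matching clauses (viii)--(ix) of Definition~\ref{defn:satisfaction}.

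For composite programs I would split on the outermost program constructor. Sequencing, choice and test are immediate: by terminality the corresponding rule of Figure~\ref{fig:rules_composite_progs} has fired for every sign/$\neg$ combination, so e.g.\ $@_i\langle\alpha;\beta\rangle\varphi\in\Theta$ forces $@_i\langle\alpha\rangle\langle\beta\rangle\varphi\in\Theta$, the Fischer--Ladner measure strictly drops, the induction hypothesis applies, and the equivalences of Lemma~\ref{lem:dynamic_axioms} carry the conclusion back to the original formula. The iteration cases must instead be argued \emph{directly in $\mM$}, since unfolding $\langle\alpha^*\rangle\varphi$ to $\langle\alpha\rangle\langle\alpha^*\rangle\varphi$ does not decrease the measure. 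For $@_i\langle\alpha^*\rangle\varphi\in\Theta$: terminality makes the $(\langle\alpha^*\rangle)$-rule give, for each $j$ with $@_j\langle\alpha^*\rangle\varphi\in\Theta$, either $@_j\varphi\in\Theta$ or both $(@_j\varphi)^-\in\Theta$ and $@_j\langle\alpha\rangle\langle\alpha^*\rangle\varphi\in\Theta$; since $\Theta$ is not ignorable of type $\langle\alpha^*\rangle\varphi$, starting at $i$ one repeatedly passes to an $\langle\alpha\rangle$-successor still carrying $\langle\alpha^*\rangle\varphi$ --- each such step being an $\overline{\rpos_\alpha}$-edge of $\mM$ by the already-treated diamond cases for the smaller program $\alpha$ --- and, $\U_{/\approx}$ being finite and $\Theta$ non-ignorable, this search reaches a nominal where $@_{\cdot}\varphi\in\Theta$, yielding a finite $\overline{\rpos_\alpha}$-path to a $\varphi$-state, i.e.\ $\mM,[i]\models\langle\alpha^*\rangle\varphi$. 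For $(@_i\langle\alpha^*\rangle\varphi)^-\in\Theta$ one argues dually that \emph{no} such path exists, by a subsidiary induction on path length using the saturation of $\Theta$ under the iteration rules (which unfold the star and propagate the relevant minus-formulas) and the already-established $\langle a\rangle$/$[a]$ cases. The remaining iteration cases --- $[\alpha^*]\varphi$, $\neg[\alpha^*]\varphi$ and their minus-forms --- are symmetric, matched to the other three types of ignorable branch.

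The main obstacle is precisely the iteration block: making the witness-chase for the diamond-star case terminate (this is where non-ignorability of $\Theta$ and finiteness of $\mM$ are essential) and, dually, making the path-induction for the box-star and minus variants exhaust \emph{all} $\overline{\rpos_\alpha}$-reachable states; a secondary complication is that when $\alpha$ is itself composite one must prove the modal cases for whole $\overline{\rpos_\alpha}$-paths rather than a single edge, so the ordering of cases and the bridge facts above have to be phrased so that the smaller-program instances are genuinely available when the star case is reached. Everything else is a matter of matching each syntactic form to its saturation rule and to the corresponding clause of the satisfaction relation.
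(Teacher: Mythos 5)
Up to the iteration cases your proposal tracks the paper's proof essentially step for step: the same extracted model, the same trivial base cases, the same treatment of the Boolean/hybrid connectives, the same $\precededby$-bookkeeping in the atomic-modality cases (including the split on whether the fresh nominal lies in $\U$), and the same one-step reductions for sequencing, choice and test via the rules of Figure~\ref{fig:rules_composite_progs} and Lemma~\ref{lem:dynamic_axioms}. Where you genuinely depart is $\alpha^*$. The paper does \emph{not} run a path-chasing argument there: it organises the composite-program part as an induction on the structure of the outermost program, with the induction hypothesis stated for all formulas $\langle\alpha\rangle\delta$, $[\alpha]\delta$, $\delta$ with $\delta$ in the Fischer--Ladner closure of $\varphi$; the star case is then a single unfolding --- from $@_i\langle\alpha^*\rangle\psi\in\Theta$ the rule $(\langle\alpha^*\rangle)$ yields either $@_i\psi\in\Theta$ or $@_i\langle\alpha\rangle\langle\alpha^*\rangle\psi\in\Theta$, and the induction hypothesis applied to $\langle\alpha\rangle\delta$ with $\delta=\langle\alpha^*\rangle\psi$ closes the case. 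Non-ignorability is never invoked inside the Model Existence proof; it is used only in the earlier theorem on discarding ignorable branches of satisfiable tableaux. So your diagnosis that the star case needs a separate semantic argument is a real deviation from the paper's route (and an understandable one, since, as you say, the unfolding does not decrease a formula-complexity measure).

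The problem is that your replacement argument for the star case has a concrete gap. Non-ignorability of type $\langle\alpha^*\rangle\varphi$ only negates the \emph{global} condition of Definition~\ref{def:branches}: it gives \emph{some} nominal $j$ with $@_j\langle\alpha^*\rangle\varphi\in\Theta$ and $@_j\varphi\in\Theta$, but nothing forces $[j]_{\approx}$ to be reachable from $[i]_{\approx}$ along the chain of $\overline{\rpos_\alpha}$-successors your witness-chase constructs. Since $\U_{/\approx}$ is finite, the chase starting at $i$ may simply cycle among nominals all carrying $(@_{\cdot}\,\varphi)^-$ while the good witness sits in an unrelated part of the branch; finiteness plus non-ignorability therefore do not deliver the claimed termination at a $\varphi$-nominal, and this is exactly the step your argument leans on (the same reachability issue undermines the dual path-induction you propose for the box-star and minus variants). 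A secondary unsupported step is that each unfolding yields an $\overline{\rpos_\alpha}$-edge of $\mM$: for composite $\alpha$ this is not a single saturation step, and for atomic $\alpha$ the existential rule may be blocked by restriction $(3)$ of Definition~\ref{def:tableau_constr}, so the edge has to be recovered through $\precededby$ and terminal nominals (Proposition~\ref{prop:existence_term_nominal}); you handle this for single modalities but never connect it to the chase. As it stands the iteration block of your proof does not go through; either the use of non-ignorability must be strengthened to produce a witness along the constructed path, or one falls back on the paper's program-structure induction with the Fischer--Ladner closure.
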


\begin{proof}
	First, consider the cases where $@_i \varphi$ is a hybrid formula.
	The proof is done by induction on the complexity of $\varphi$. 
	
	The base step concerns the cases where $\varphi$ is either a nominal, a propositional variable, $\bot$, or the negation of one of those. Satisfaction of $\varphi$ in $\mM, [i]_{\approx}$ is trivially obtained by the construction of the model $\mM$.
	
	We assume that the result holds for subformulas of $\varphi$ and their negation (I.H.). We illustrate the cases $\varphi = \psi\wedge\delta$, $\varphi= \neg (\psi\wedge\delta)$ and $\varphi = \langle a\rangle \psi$, $\varphi = \neg\langle a\rangle \psi$, for $a\in\Act$:
	
	\begin{itemize}
		\item $\varphi=\psi\wedge\delta$
		\begin{itemize}\setlength\itemsep{0cm}
			\item[$\mathrm{(i)}$] $@_i (\psi\wedge\delta)\in\Theta$, then, by applying the rule ($\wedge$), $@_i\psi, @_i\delta\in\Theta$.
			By I.H., $\mM, [i]_{\approx}\models \psi$ and $\mM, [i]_{\approx}\models \delta$. Therefore \mbox{$\mM, [i]_{\approx}\models \psi\wedge\delta$}.
			
			\medskip
			
			\item[$\mathrm{(ii)}$] $(@_i (\psi\wedge\delta))^-\in\Theta$, then, by applying the rule (${\wedge^-}$), $(@_i\psi)^-\in \Theta$ or $(@_i\delta)^-\in\Theta$.
			Hence, by I.H., $\mM, [i]_{\approx}\not\models \psi$ or \mbox{$\mM, [i]_{\approx}\not\models \delta$}. Therefore, $\mM, [i]_{\approx}\not\models \psi\wedge \delta$.
		\end{itemize}
		
		\medskip
		
		\item $\varphi=\neg(\psi\wedge\delta)$
		\begin{itemize}\setlength\itemsep{0cm}
			\item[$\mathrm{(i)}$] $@_i \neg(\psi\wedge\delta)\in\Theta$, then, by applying the rule ($\neg\wedge$), $@_i\neg\psi\in\Theta$ or \mbox{$@_i\neg\delta\in\Theta$}.
			By I.H., either $\mM, [i]_{\approx}\models \neg\psi$ or $\mM, [i]_{\approx}\models \neg\delta$. Therefore, \mbox{$\mM, [i]_{\approx}\models \neg(\psi\wedge\delta)$}.
			
			\medskip
			
			\item[$\mathrm{(ii)}$] $(@_i \neg(\psi\wedge\delta))^-\in\Theta$, then, by applying the rule (${\neg\wedge^-}$), $(@_i\neg\psi)^-,\linebreak (@_i\neg\delta)^-\in\Theta$.
			Hence, by I.H., $\mM, [i]_{\approx}\not\models \neg\psi$ and $\mM, [i]_{\approx}\not\models \neg\delta$. Therefore, $\mM, [i]_{\approx}\not\models \neg(\psi\wedge\delta)$.
		\end{itemize}
	
		\medskip
	
		\item $\varphi=\langle a\rangle \psi$
		\begin{itemize}\setlength\itemsep{0cm}
			\item[$\mathrm{(i)}$]
			
			\begin{itemize}\setlength\itemsep{0cm}
				\item if $\psi=j$, $j\in\Nom$:
				
				$@_i\langle a\rangle j\in\Theta$, then $[i]_{\approx}\overline{\rel_a^+}[j]_{\approx}$ and by definition \mbox{$\mM,[i]_{\approx}\models \langle a \rangle j$}.
				
				\item if $\psi$ is not a nominal:
				
				$@_i\langle a\rangle \psi\in\Theta$, then by the application of the rule $(\langle a \rangle)$, $@_i\langle a\rangle t$ and $@_t\psi\in\Theta$, for a new nominal $t$. Then:
				
				\begin{itemize}\setlength\itemsep{0cm}
					\item if $t\in \U$, $[i]_{\approx}\overline{\rel_a^+}[t]_{\approx}$. By I.H., $\mM,[t]_{\approx}\models\psi$, so $\mM,[i]_{\approx}\models\langle a\rangle \psi$.
					
					\item if $t\notin \U$, $\exists s$ such that $t\precededby s$. Assume that there is no $r$ such that $s\precededby r$, \emph{i.e.}, $s\in \U$.
					Then by the definition of $\precededby$, $@_s\psi\in\Theta$. Observe also that $\psi$ is a subformula of $\varphi$, thus by I.H., \mbox{$\mM,[s]_{\approx}\models \psi$}.
					By the construction of $\mM$, $[i]_{\approx}\overline{\rel_a^+}[s]_{\approx}$.
					Therefore, $\mM,[i]_{\approx}\models\langle a\rangle \psi$.
				\end{itemize}
			\end{itemize}
			
			\medskip
			
			\item[$\mathrm{(ii)}$]
			$(@_i\langle a\rangle \psi)^-\in\Theta$. We want to prove that $\mM, [i]_{\approx}\not\models\langle a\rangle \psi$, \emph{i.e.}, that for all $[k]_{\approx}$ such that $[i]_{\approx}\overline{\rel_a^+}[k]_{\approx}$, $\mM, [k]_{\approx}\not\models \psi$.
			
			Let $k$ be a nominal such that $[i]_{\approx}\overline{\rel_a^+}[k]_{\approx}$; by definition there exists $k'$ with $k'{\approx} k$ that satisfies one of the following conditions:
			
			\begin{itemize}\setlength\itemsep{0cm}
				\item $@_i\langle a\rangle k'\in\Theta$, which then by applying the rule (${\langle a\rangle^-}$), implies that $(@_{k'} \psi)^-\in\Theta$. By I.H., $\mM, [k']_{\approx}\not \models \psi$. Since $[k']_{\approx}=[k]_{\approx}$, follows that $\mM, [k]_{\approx}\not\models\psi$. Or:
				
				\item $\exists s\in \Nom^\Theta$ such that $@_i\langle a\rangle s\in\Theta$ and $s\precededby k'$, then by applying ($\langle a\rangle^-$) it follows that $(@_s \psi)^-\in\Theta$. Since $s\precededby k'$, $(@_{k'}\psi)^-\in\Theta$. By I.H., $\mM, [k']_{\approx}\not\models \psi$. Since $[k']_{\approx}=[k]_{\approx}$, then $\mM, [k]_{\approx}\not\models\psi$.
			\end{itemize}
			
			It follows that $\mM, [i]_{\approx}\not\models \langle a\rangle \psi$.
			
		\end{itemize}
		
		\medskip
		
		\item $\varphi=\neg\langle a\rangle\psi$
		
		\begin{itemize}\setlength\itemsep{0cm}
			\item[$\mathrm{(i)}$] $@_i\neg\langle a\rangle\psi\in\Theta$. We want to prove that $\mM, [i]_{\approx}\models\neg\langle a\rangle\psi$, \emph{i.e.}, that for all $[k]_{\approx}$ such that $[i]_{\approx}\left(\overline{\rel_a^-}\right)^c[k]_{\approx}$, $\mM, [k]_{\approx}\models\neg \psi$.
			
			Let $k$ be a nominal such that $[i]_{\approx}\left(\overline{\rel_a^-}\right)^c[k]_{\approx}$;
			by definition, there exists $k'$ with $k'{\approx} k$ that satisfies one of the following two conditions:
			
			\begin{itemize}\setlength\itemsep{0cm}
				\item $@_i\neg[a]\neg k'\in\Theta$ which then by applying the rule ($\neg\langle a\rangle$) implies that $@_{k'} \neg \psi\in\Theta$ and by I.H., $\mM, [k']_{\approx}\models\neg \psi$. Since $[k']_{\approx}=[k]_{\approx}$, follows that $\mM, [k]_{\approx}\models \neg\psi$. Or:
				
				\item $\exists s\in \Nom^\Theta$ such that $@_i\neg[a]\neg s\in\Theta$ and $s\precededby k'$, then by applying the rule ($\neg \langle a\rangle$), $@_s \neg \psi\in\Theta$. Since $s\precededby k'$, $@_{k'}\neg\psi\in\Theta$. By I.H., $\mM, [k']_{\approx}\models\neg \psi$. Since $[k]_{\approx}=[k']_{\approx}$, $\mM, [k]_{\approx}\models\neg \psi$.
			\end{itemize}
			Therefore \mbox{$\mM, [i]_{\approx}\models \neg\langle a\rangle\psi$.}
			
			\medskip
			
			\item[$\mathrm{(ii)}$] $(@_i\neg\langle a\rangle \psi)^-\in\Theta$, thus by applying rule (${\neg\langle a\rangle^-}$), \mbox{$@_i\neg[a]\neg t, (@_t \neg \psi)^-\in\Theta$}, for a new nominal $t$. Then:
			
			\begin{itemize}\setlength\itemsep{0cm}
				\item if $t\in \U$, $[i]_{\approx}\left(\overline{\rel_a^-}\right)^c[t]_{\approx}$. By I.H., $\mM, [t]_{\approx}\not\models \neg\psi$. Thus $\mM,[i]_{\approx}\not\models \neg\langle a\rangle\psi$.
				
				\item if $t\notin \U$, $\exists s$ such that $t\precededby s$. Assume that there is no $r$ such that $s\precededby r$, \emph{i.e.}, $s\in \U$. Since $t\precededby s$, follows that $(@_s\neg\psi)^-\in\Theta$.	By I.H., $\mM,[s]_{\approx}\not\models \neg\psi$ and by definition $[i]_{\approx}\left(\overline{\rel_a^-}\right)^c[s]_{\approx}$. It follows that \mbox{$\mM,[i]_{\approx}\not\models\neg\langle a\rangle \psi$}.
			\end{itemize}
	\end{itemize}
\end{itemize}
	
	Other cases follow an analogous approach.\footnote{For the interested reader, the proof of Theorem 4 in \cite{jlamppaper} may serve as a guide for the remaining cases. Observe that the cases for modal formulas (in this paper we proved the $\langle a\rangle$ case) and implication deviate slightly given the different semantics and, of course, rules provided in that paper for these particular cases.}
	
	\medskip
	
	We present a detailed proof for the cases where $\varphi$ is a modal formula of the form $\langle\alpha\rangle\psi$ or $[\alpha]\psi$. The proof is by induction on the complexity of $\alpha$.
	
	The base cases, where $\alpha$ is an atomic program, reduces to a hybrid formula and thus to one of the cases aforementioned.	
	
	\textbf{Induction Hypothesis} (I.H.): the result holds for formulas $\langle\alpha\rangle\delta, \langle\beta\rangle\delta, [\alpha]\delta$, $[\beta]\delta$ and $\delta$ in the Fischer-Ladner closure of $\varphi$ (Definition \ref{def:fladner}).
	
	We prove only the case $\mathrm{(i)}$ as the other is analogous.
	
	\begin{itemize}
		\item $\varphi=\langle \alpha;\beta\rangle\psi$
		\begin{itemize}\setlength\itemsep{0cm}
			\item[$\mathrm{(i)}$] $@_i \langle \alpha;\beta\rangle\psi\in\Theta$, then, by applying the rule ($\langle \alpha;\beta\rangle$), $@_i \langle \alpha\rangle\langle \beta\rangle \psi\in\Theta$. By I.H. on $@_i \langle\alpha\rangle \delta$, where $\delta=\langle\beta\rangle\psi$, follows that $\mM, [i]_{\approx}\models \langle \alpha\rangle\langle\beta\rangle\psi$; equivalently, $\mM, [i]_{\approx}\models \langle \alpha;\beta\rangle\psi$.
		\end{itemize}
	
	\medskip
	
	\item The cases $\varphi= [\alpha;\beta]\psi, \langle\alpha\cup\beta\rangle\psi, [\alpha\cup\beta]\psi$, the negations of these and of the previous bullet are similar.
	
	\medskip
	
	\item $\varphi=\langle\chi?\rangle \psi$
	\begin{itemize}
		\item[$\mathrm{(i)}$]  $@_i \langle \chi?\rangle \psi\in\Theta$, then by applying the rule $\langle \chi?\rangle$, $@_i (\chi\wedge \psi)\in\Theta$. By I.H.,  $\mM, [i]_{\approx}\models \chi\wedge\psi$. Therefore $\mM, [i]_{\approx}\models \langle \chi?\rangle\psi$.
	\end{itemize}

	\medskip

	\item $\varphi=[\chi?]\psi$ is analogous, as are the cases with negation occurring before the modal operator.
	
	\medskip
	
	\item $\varphi=\langle\alpha^*\rangle\psi$
	\begin{itemize}
		\item[$\mathrm{(i)}$]  $@_i \langle\alpha^*\rangle\psi \in \Theta$, then by applying the rule $(\langle\alpha^*\rangle)$, either $@_i\psi\in\Theta$ or $(@_i \psi)^-$, $@_i\langle\alpha\rangle\langle\alpha^*\rangle\psi\in\Theta$. By I.H., either $\mM, [i]_{\approx}\models \psi$ or\linebreak \mbox{$\mM, [i]_{\approx}\models \langle\alpha\rangle\langle\alpha^*\rangle\psi$}. Anyway, $\mM, [i]_{\approx}\models \langle\alpha^*\rangle\psi$.
	\end{itemize}
	
	\medskip
	
	\item $\varphi=[\alpha^*]\psi$ and versions with negation occurring before the modal operator are once again analogous.
	
	\end{itemize}
\end{proof}

Completeness can be easily proved now:

\begin{thm}[Completeness]\label{thm:completeness}
	If $\varphi$ is a consequence of $\Delta$, then the tableau with root $\Delta,\varphi^-$ is closed.
\end{thm}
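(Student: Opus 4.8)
The plan is to establish the contrapositive: assuming the (terminal) tableau $\tau$ with root $\Delta,\varphi^-$ is \emph{not} closed, I will produce a model witnessing that $\varphi$ is not a global consequence of $\Delta$. Since the construction algorithm terminates, $\tau$ may be taken terminal, and a tableau that is not closed has, by definition, at least one \emph{open} branch $\Theta$ — terminal, not closed, not ignorable. Everything then reduces to feeding $\Theta$ to the Model Existence Theorem (Theorem~\ref{thm:model_existence}): the extracted model $\mM_\Theta$ should satisfy $\mM_\Theta\models\Delta$ and $\mM_\Theta\not\models\varphi$, which is exactly the failure of $\Delta\models_g\varphi$.

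For $\mM_\Theta\models\Delta$, I would note that each $\delta\in\Delta$ occurs on $\Theta$. If $\delta$ is a satisfaction statement $@_j\psi$, clause~(i) of Theorem~\ref{thm:model_existence} gives $\mM_\Theta,[j]_\approx\models\psi$, and since $\overline{\nval}(j)=[j]_\approx$ this is $\mM_\Theta\models@_j\psi$. If $\delta$ is not a satisfaction statement, then, $(@_\mathrm{I})$ being non-destructive and $\Theta$ terminal, $@_i\delta\in\Theta$ for every nominal $i$ on $\Theta$ — in particular for every $i\in\U$ — so clause~(i) yields $\mM_\Theta,[i]_\approx\models\delta$ for all such $i$, hence $\mM_\Theta\models\delta$ globally, as every point of $\mM_\Theta$ is some $[i]_\approx$ with $i\in\U$. (If $\Delta$ is allowed to contain minus-formulas, the mirror argument with $(@^-_\mathrm{I})$ and clause~(ii) handles them.)

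For $\mM_\Theta\not\models\varphi$, the root formula $\varphi^-$ is on $\Theta$, and one application of the destructive rule $(@^-_\mathrm{I})$ puts $(@_t\varphi)^-$ on $\Theta$ for a fresh nominal $t$; such $t$ is \emph{self-generated}, hence $\star<_\Theta t$ and so $t\in\U$. Clause~(ii) of Theorem~\ref{thm:model_existence} then gives $\mM_\Theta,[t]_\approx\not\models\varphi$, so $\mM_\Theta\not\models\varphi$. Putting the two halves together, $\mM_\Theta$ models $\Delta$ but not $\varphi$, contradicting $\Delta\models_g\varphi$; taking $\Delta=\emptyset$ recovers the statement that a non-valid formula is refuted by an open tableau.

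The delicate point — where I expect most of the care to go in a full write-up — is the passage from the \emph{local} (at a named point) satisfaction furnished by Theorem~\ref{thm:model_existence} to the \emph{global} satisfaction demanded by $\models_g$. Concretely, one must verify that a terminal branch, closed under $(@_\mathrm{I})$ and $(@^-_\mathrm{I})$, really does carry $@_i\delta$ for \emph{every} $i\in\U$, and — crucially, to make Theorem~\ref{thm:model_existence} applicable at all — that every nominal occurring in a root formula ends up in $\U$ (equivalently, is $<_\Theta$-below $\star$), so that the hypothesis ``$@_i\varphi$ contains only nominals from $\U$'' of that theorem is met for the formulas whose truth we are reading off. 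Beyond this bookkeeping, the completeness proof is immediate from Model Existence.
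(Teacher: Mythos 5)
Your proof is correct and follows essentially the same route as the paper: the contrapositive via an open branch of the terminal tableau and the Model Existence Theorem (Theorem~\ref{thm:model_existence}). The paper's own proof is just a terser version of this; the extra bookkeeping you supply (passing from pointwise satisfaction at named states to global satisfaction via $(@_\mathrm{I})$/$(@^-_\mathrm{I})$ and membership of root nominals in $\U$) is exactly the detail the paper leaves implicit.
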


\begin{proof}
	Assume that the tableau $\tau$ with root $\Delta,\varphi^-$ is not closed.
	Then $\tau$ is such that it has an open branch, which by Theorem~\ref{thm:model_existence} implies that there is a model $\mM$ where all formulas in $\Delta$ and $\varphi^-$ are globally satisfiable.
	That means that $\varphi$ is not globally satisfiable in $\mM$. Therefore $\varphi$ is not a consequence of $\Delta$.
\end{proof}

\section{Conclusion}
The purpose of this paper is to, first and foremost, start a discussion about the roles of negation and modal operators when combined in a paraconsistent environment. Namely, it begins by presenting a four-valued Hybrid Logic, posteriorly used as the basis of a dynamic version, where the formal duality between modal operators is not preserved. One of the reasons behind this choice stems from the discussions of Fitting in \cite{melv}, for the predicate abstraction of modal formulas in models with varying domains. His ideas are, in some ways, related to the discussion about the fact that negation is always carried inside modal formulas by making use of the formal duality between $\Box$ and $\Diamond$. However, as debated in the present paper, by doing so the negation concerns only propositional variables and/or nominals and completely ignores the transition involved in the semantics of the \emph{whole} modal formula. In cases where accessibility relations are classical this is a natural construction, but it fails to grasp the true meaning of the assertion when the accessibility relations become four-valued functions.

The paraconsistent versions of Modal Logic available in literature and where transitions are many-valued have been thought in ways that they keep formal duality between modal operators. In some cases, such as Modal Bilattice Logic \cite{Riv1}, an extension with nominals and the satisfaction operator is such that $\rel(w,w')\neq \val(@_i\Diamond j, -)$ ($i$ names the state $w$ and $j$ the state $w'$), which is highly undesirable.

This paper aimed to provide a logic where inconsistent models can be \linebreak(uniquely) described resorting to diagrams, \emph{i.e.}, the set of all atomic formulas that hold in the model. This can be achieved when all states in the model are named -- which justifies the use of nominals -- and there is special machinery to be able to refer to each state in the model -- the satisfaction operator $@$. Inconsistent models consist of sets of states together with information about transitions and the absence of transitions between states and information about the verification and refutation of local properties, hence our models incorporate both four-valued relations and four-valued (local) valuations. A four-valued Hybrid Logic, \fhl, is the central point of the first section of the paper. 

The second section introduces a dynamic version of \fhl. The concept of modality is replaced by that of program, which can be composed. One can view each relation $\rel_\alpha$ as the set of pairs of states $(w,w')$ such that $w$ is an input state and $w'$ an output state of the program $\alpha$. From a four-valued perspective, $w\rpos_\alpha w'$ if we have information that the program $\alpha$ runs in $w$ and terminates at $w'$ and $w\rneg_\alpha w'$ if we have information that the program $\alpha$ does not run between $w$ and $w'$. The four-valued approach is a means to deal with information obtained by collecting different testimonies. The interpretations of $\rpos_{\alpha;\beta}, \rpos_{\alpha\cup\beta}, \rpos_{\alpha^*}, \rpos_{\varphi?}$ are analogous to those in Propositional Dynamic Logic; for the negative cases, the interpretations resort to the complements of the underlying relations. The axioms of \pdl\ hold in \fdl, making the latter a truly Dynamic Logic, albeit four-valued and with non-dual modal operators.

The third and final section of the paper introduces a terminating, sound and complete tableau system for \fdl. Some of the techniques in \cite{jlamppaper} were repurposed. But there were also some necessary additions, namely the introduction of the notions of ignorable branch in order to deal with the iteration program and a result to ensure that those branches make their own name justice and can be indeed safely ignored.

Even though \fdl\ was created independently, it would be interesting to check, in the future, its connection to a logic obtained by applying the technique in \cite{madeiradyn}. Another topic for further research is the study of bisimulation.

\bibliographystyle{abbrv}
\bibliography{biblio}

\end{document}